\newtheorem{definition}{Definition}
\newtheorem{lemma}{Lemma}
\newtheorem{theorem}{Theorem}
\def\BibTeX{{\rm B\kern-.05em{\sc i\kern-.025em b}\kern-.08em
    T\kern-.1667em\lower.7ex\hbox{E}\kern-.125emX}}
\begin{document}
\title{
    When Metaverses Meet Vehicle Road Cooperation: Multi-Agent DRL-Based Stackelberg Game for Vehicular Twins Migration}
    
\author{
    \IEEEauthorblockN{Jiawen Kang, \textit{Senior Member, IEEE},
        Junhong Zhang,
        Helin Yang,
        Dongdong Ye, and \\
         M. Shamim Hossain, \textit{Senior Member, IEEE}
    }
    \thanks{
        J. Kang, J. Zhang, and D. Ye are with the Guangdong University of Technology, China (e-mail: kavinkang@gdut.edu.cn; junhong1013@163.com; dongdongye8@163.com).

        H. Yang are with the School of Informatics, Xiamen University, China (e-mail: helinyang066@xmu.edu.cn).

        M. Shamim Hossain is with the Department of Software Engineering, College of Computer and Information Sciences, King Saud University, Riyadh 12372 (email: mshossain@ksu.edu.sa).

        The work was presented in part at the 2023 IEEE International Conference on Distributed Computing Systems Workshops (ICDCSW) (\textit{*Corresponding author: M. Shamim Hossain})
    }
}

\maketitle

\begin{abstract}
    Vehicular Metaverses represent emerging paradigms arising from the convergence of vehicle road cooperation, Metaverse, and augmented intelligence of things. Users engaging with Vehicular Metaverses (VMUs) gain entry by consistently updating their Vehicular Twins (VTs), which are deployed on RoadSide Units (RSUs) in proximity. The constrained RSU coverage and the consistently moving vehicles necessitate the continuous migration of VTs between RSUs through vehicle road cooperation, ensuring uninterrupted immersion services for VMUs. Nevertheless, the VT migration process faces challenges in obtaining adequate bandwidth resources from RSUs for timely migration, posing a resource trading problem among RSUs. In this paper, we tackle this challenge by formulating a game-theoretic incentive mechanism with multi-leader multi-follower, incorporating insights from social-awareness and queueing theory to optimize VT migration. To validate the existence and uniqueness of the Stackelberg Equilibrium, we apply the backward induction method. Theoretical solutions for this equilibrium are then obtained through the Alternating Direction Method of Multipliers (ADMM) algorithm. Moreover, owing to incomplete information caused by the requirements for privacy protection, we proposed a multi-agent deep reinforcement learning algorithm named MALPPO. MALPPO facilitates learning the Stackelberg Equilibrium without requiring private information from others, relying solely on past experiences. Comprehensive experimental results demonstrate that our MALPPO-based incentive mechanism outperforms baseline approaches significantly, showcasing rapid convergence and achieving the highest reward.
\end{abstract}

\begin{IEEEkeywords}
    Metaverse, vehicle road cooperation, multi-leader multi-follower Stackelberg game, ADMM, deep reinforcement learning.
\end{IEEEkeywords}

\section{Introduction}

\IEEEPARstart{T}{he} Metaverse, deemed the "successor to the mobile Internet," seamlessly merges physical and virtual spaces, driven by advanced technologies including Extended Reality (XR), Digital Twin (DT), Augmented Intelligence of Things (AIoT), and haptic technologies \cite{xu2022full}. Vehicular Metaverses implement the Metaverses within autonomous vehicles and intelligent roads and are expected to revolutionize the vehicle road cooperation systems \cite{zhou2022vetaverse}. For instance, Panasonic unveiled its AR HUD (head-up displays) 2.0 system in 2022 that uses artificial intelligence to display lane edges, objects on the road, and other information through windshields and side windows, thereby enhancing the driving experience with greater intuitiveness and satisfaction. In Vehicular Metaverses, each Vehicular Metaverse User (VMU) connects with its corresponding Vehicular Twin (VT) to access the virtual space for immersive services like AR navigation, virtual meetings, and Massive Multiplayer Online
(MMO) games. The VT acts as a vital bridge between the physical and virtual spaces throughout the vehicle and VMU life cycle, providing more intelligent and efficient services \cite{xu2022full,hu2022review}. Equipped with onboard sensors such as radars, cameras, and location receivers, VMUs continually collect real-time information, including current location, historical trajectory, and service preferences, generating VT tasks for synchronizing the VT \cite{zhang2023learningbased}. Based on the information provided by VMUs, VTs can ask for Metaverse service preparation and recommendations from RoadSide Units (RSUs) and the cloud.

However, on the one hand, the locally limited computation resources of VMUs may pose challenges in executing computation-intensive VT tasks and updating results to RSUs \cite{ren2022quantum}. Consequently, VMUs prefer offloading these VT tasks to proximal RSUs through vehicle road cooperation. On the other hand, owing to the constrained RSU coverage and the consistently moving vehicles, VMUs progressively move away from their VTs. As a result, it is necessary to migrate VTs between RSUs to ensure uninterrupted Metaverse services as vehicles move, called “Vehicular Twins migration” \cite{zhang2023learningbased}. To ensure VT migration efficiency, the migration source RSU (Metaverse Service Provider, MSP) needs to purchase sufficient bandwidth resources from the migration destination RSU (Metaverse Resource Provider, MRP) to facilitate VT migration. In the resource trading process, MSPS must comprehend the pricing trends of MRPs, allowing them to make informed decisions regarding bandwidth demand aligned with their specific requirements. Likewise, MRPs may strategically adjust prices in response to resource demand, aiming to foster sustainable growth within their customer base. Hence, the design of an appropriate incentive mechanism is crucial to facilitate trading between MSPs and MRPs.

Motivated by the sequential decision-making between MSPs and MRPs, we formulate a game-theoretic incentive mechanism to capture the distributed decision-making features of these players. Specifically, we model the interaction between MSPs and MRPs as a Stackelberg game with multi-leader multi-follower, with MRPs as leaders and MSPs as followers\cite{nie2020multi,huang2022joint,xu2021privacy}. In addition to exploring trading issues between MSPs and MRPs, we delve into positive social effects among MSPs and incorporate them in the game modeling of players. Metaverse services often share similar functions and content \cite{chu2023dynamic}. For instance, popular AR mobile games like Pokémon Go and Pikmin Bloom currently utilize similar functionalities offered by the Google Maps API. By strategically migrating VTs that request the same Metaverse service to a common destination, the resource consumption at the destination can be significantly reduced, enhancing the service experience for VMUs. Furthermore, traditional methods like the backward induction method \cite{huang2022joint} for solving the Stackelberg game require all players to provide all their private information, which is unreasonable in reality. Rational players in the game tend to safeguard their privacy while seeking maximum profit. Fortunately, recent advancements in Deep Reinforcement Learning (DRL) offer a promising approach to learning optimal strategies based solely on experience, eliminating the need for prior information. Consequently, DRL can effectively derive the Stackelberg Equilibrium (SE) without disclosing any player's private information \cite{xu2021multiagent, xu2021privacy, zhang2023learningbased}.

In this paper, we propose a multi-leader multi-follower game-theoretic incentive mechanism for VT migration, integrating social awareness and queueing theory. Initially, we first outline a comprehensive VT migration process and calculate the migration delay based on the queueing theory and edge computing. By integrating social effects into the players' utility, we formulate the problem as a Stackelberg game. The existence and uniqueness of the SE are then proven using the backward induction method, and the theoretical solution is derived through the Alternating Direction Method of Multipliers (ADMM) algorithm. Finally, to protect the privacy of players, we transform the proposed game into a multi-agent partially observable Markov decision process and design a \underline{M}ulti-\underline{A}gent \underline{L}STM-based \underline{P}roximal \underline{P}olicy \underline{O}ptimization (MALPPO) to determine optimal solutions. Simulation results demonstrate the superior performance of our MALPPO algorithm over baselines, approaching the SE.
As far as we know, we are the first to jointly consider social effects and privacy protection in the exploration of incentive mechanisms for the Metaverse. The key contributions can be summarized as follows:
\begin{itemize}
    \item Given the constrained RSU coverage and the consistently moving vehicles, we proposed a VT migration-assisted Vehicular Metaverses framework ensuring the continuous immersion Metaverse service based on the queueing theory.
    \item To improve the VT migration efficiency between MSPs and MRPs, we formulate a game-theoretic incentive mechanism with multi-leader multi-follower considering the social effects. We then prove the existence and uniqueness of the SE through backward induction and employ the ADMM algorithm to derive the theoretical solution.
    \item  We design a multi-agent deep reinforcement learning algorithm named MALPPO to address incomplete information and security concerns in solving the Stackelberg game. Our numerical results highlight the effectiveness of our learning-based approach, showing quick convergence to the SE and superior performance compared to baseline schemes such as vanilla MAPPO and MAA2C.
\end{itemize}
The structure of the paper unfolds as follows: Section II furnishes an overview of related works. Section III introduces the system model, while Section IV engages in the discussion of SE. Moving on to Section V, we delve into the proposed DRL-based solution. Section VI presents the simulation results and analysis. Section VII summarizes this paper.

\section{Related Works}
\subsection{Metaverse}
The origin of the Metaverse can be traced back to Neal Stephenson's 1992 science fiction novel, Snow Crash. Since then, propelled by advancements in cutting-edge technologies, the Metaverse has garnered significant attention from both industry and academia. Notably, in 2021, Facebook's rebranding to "Meta" underscored a transformative shift from a mere "social media company" to a dedicated "Metaverse company," emphasizing its commitment to Metaverse development. Xu et al.'s study \cite{xu2022full} explores challenges and solutions in the edge-enabled Metaverse, focusing on communication, networking, computation, and blockchain for immersive access, aligning with the Metaverse vision. Zhou et al. \cite{zhou2022vetaverse} introduce the term “Vetaverse” signaling an envisioned convergence between vehicular industries and the Metaverse. Numerous studies concentrate on specific technologies within the Metaverse.
For instance, Hu et al. \cite{hu2022review} systematically detail the construction of the driver's digital twin, emphasizing key aspects and technologies for understanding both external states (e.g., distraction, drowsiness) and internal states (e.g., intentions, emotions, trust). Du et al. \cite{du2023ai} propose an efficient information-sharing scheme employing full-duplex device-to-device semantic communications to address the limited computation power in Vehicular Mixed Reality Metaverses.

\subsection{Incentive Mechanism in Metaverse}
Recent research in the Metaverse has explored various incentive mechanisms, each tailored to different optimization objectives. In \cite{jiang2022reliable}, a hierarchical game-based coded distributed computing framework is proposed, focusing on optimizing real-time rendering. This framework incorporates reputation-based miner coalition formation and a Stackelberg game incentive mechanism. In \cite{xu2022wireless}, researchers address challenges related to user matching and resource pricing between VR users and service providers. Their solution incorporates a Double Dutch auction mechanism along with deep reinforcement learning. In \cite{doe2023promoting}, a contract-theoretic incentive mechanism is introduced to motivate users, addressing resource challenges in operating full nodes and promoting blockchain network sustainability and growth. Additionally, \cite{ren2022quantum} proposes a quantum collective learning and many-to-many matching scheme in the Metaverse for autonomous vehicles. While significant progress has been made in studying incentive mechanisms for the Metaverse, the majority of existing works overlook the VT migration issue arising from the consistently moving vehicles and neglect the privacy protection problem in optimization. Consequently, designing an incentive mechanism suitable for VT migration remains a challenging task.

\subsection{Reinforcement Learning Based Stackelberg Game}
Reinforcement Learning (RL) has proven to be a powerful tool in achieving equilibrium solutions in the Stackelberg game without requiring participants to disclose their privacy. In \cite{yao2019resource}, the authors apply the "WoLF-PHC" multi-agent RL approach to model cloud provider-miner interactions. Zhan et al. \cite{zhan2020learning} use a policy optimization approach in a Stackelberg game for federated learning on IoT. In our prior work \cite{zhang2023learningbased}, we propose a learning-based Stackelberg game considering only a single leader for VT migration in vehicular metaverses. However, these studies primarily focus on a Stackelberg game in a single-provider scenario, potentially leading to a monopoly market and reduced competitiveness among resource providers, diverging from the reality of multiple providers. In \cite{zhao2023multi}, a multi-leader-follower Stackelberg game incentivizes edge devices in federated edge learning using an MA-DADDPG algorithm. In \cite{xu2021privacy}, Xu et al. the interactions between providers and IoT devices are formulated as a multi-leader-follower Stackelberg game, and an RL-based algorithm is designed to determine the optimal pricing strategy with privacy preservation. However, most of the existing works ignore the cooperativeness and social effects between the followers. Building upon all the above limitations of existing works, as far as we know, we are the first to jointly consider the cooperativeness among the followers and competitiveness among the leaders while studying the incentive mechanisms for Metaverse under privacy preservation.

\section{System Model}

\subsection{Migration-empowered Vehicular Metaverses Model}
\begin{figure*}[t]
    \centering{\includegraphics[width=0.85\textwidth]{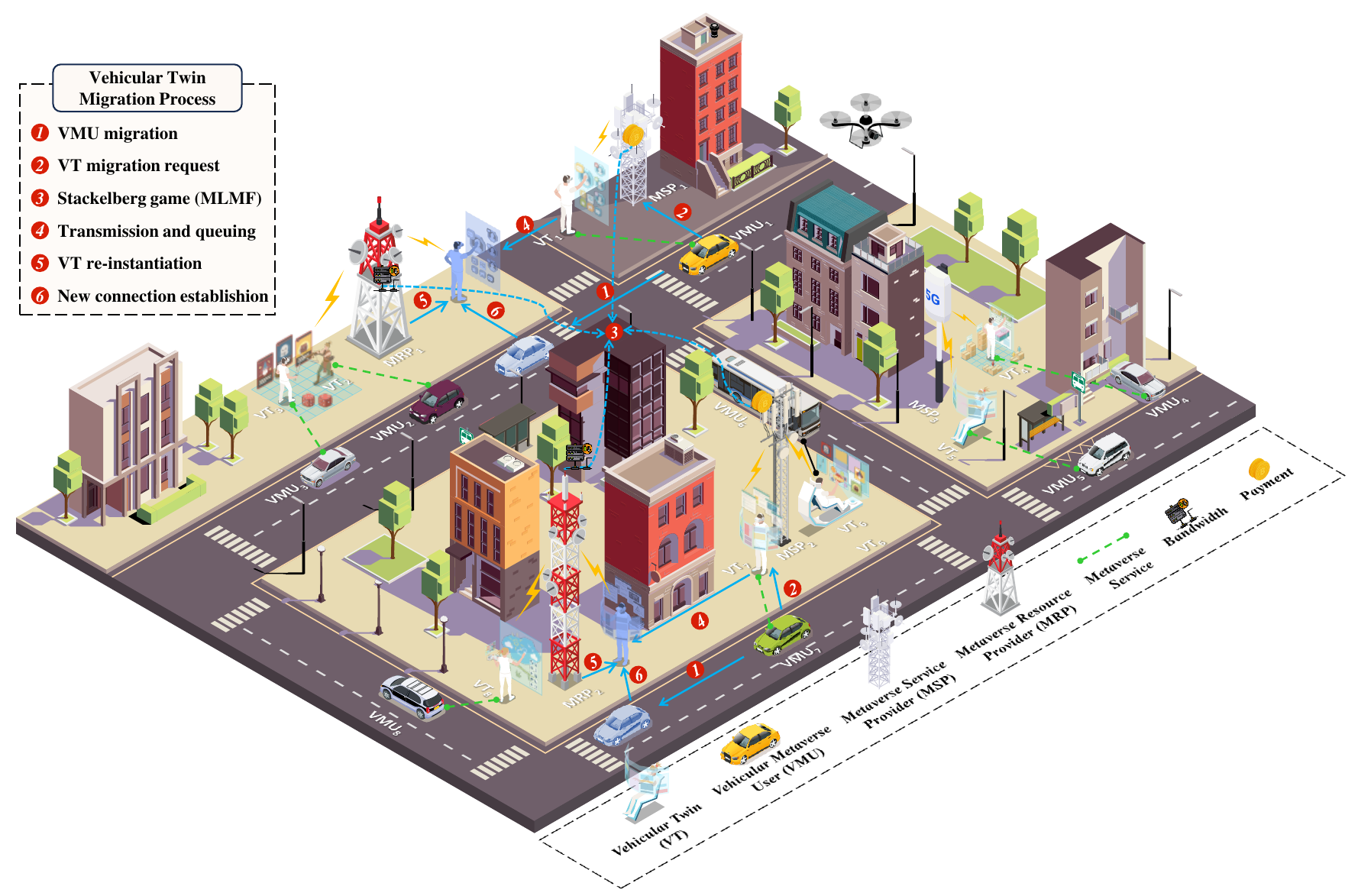}}
    \caption{An illustration of the multi-leader multi-follower Stackelberg game-based vehicular twin migration framework.}\label{framework}
\end{figure*}

The system model of Migration-empowered Vehicular Metaverses (MVM) is illustrated in Fig. \ref{framework}. In this framework, each VMU accesses the Metaverse through its corresponding VT to obtain immersive services. The VT is a digital twin covering the lifecycle of the VMU. The VMU gathers real-time information, including location and service preferences, through its sensors (such as cameras and radars) to generate VT tasks and synchronize the VT. Due to the limited local computation and storage capabilities of VMUs, it is challenging for them to execute computation-intensive VT tasks. Consequently, VMUs choose to offload VTs onto nearby RSUs, leveraging their abundant resources to perform large-scale VT synchronization tasks.

However, due to the constrained RSU coverage and the consistently moving vehicles, VMUs progressively move away from their VTs. As a result, it is necessary to migrate VTs between RSUs to maintain continuous Metaverse services as vehicles move. In the MVM, the MSPs are responsible for collecting sensing data from VMUs to construct and maintain their VTs. To ensure the immersive service of the VMU, each MSP purchases bandwidth resources from the target RSUs (i.e., MRPs) and migrates the VMU's VT when the VMU is about to leave the service range of it. When a VT arrives at the MRP, it is added to the MRP processing queue waiting for re-instantiation. It is noteworthy that the MRP transforms into a new MSP after completing the VT migration.

\subsection{Migration Task and Delay Model}
In this study, we consider the engagement of $M$ MRPs and $N$ MSPs in VT migration and resource trading, represented by the sets $\mathcal{N}=\{1,\ldots,n,\ldots, N\}$ and $\mathcal{M}=\{1,\ldots,m,\ldots, M\}$. The vehicular twin migration task of MSP $i$ is represented as $J_{i}=\left\{D_{i}, L_i, K _{i}^{\max }\right\}$, where $D_i$ is the total migrated VT data including the vehicle configuration, historical interaction data, and real-time VT states, the $L_i$ represents the number of CPU cycles required by re-instantiating VT, and $K _{i}^{\max }$ denotes the maximum tolerant delay \cite{zhang2023learningbased}. To guarantee the quality of the immersive Metaverse service, the total migration delay must be less than $K_i^{\max}$. To overcome the interference among MSPs during the twin migration task, we adopt the Orthogonal Frequency Division Multiplexing Access (OFDMA) technology in the MVM \cite{huang2022joint} to guarantee that all transmission channels between MSPs and MRPs remain orthogonal. Given the bandwidth $b_{ij}$ purchased by MSP $i$ from the MRP $j$, the transmission data rate can be calculated as $r_{ij}=b_{ij} \log _{2}\left(1+\frac{\rho^{} h d^{-\varepsilon}}{N_{0}}\right)$, where $\rho$, $h$, $d$, $\varepsilon$, and ${N}_{0}$ represent the transmitter power of MSP, the unit channel power gain, the distance between the MSP and the MRP, the path-loss coefficient, and the additive white Gaussian noise (AWGN) power, respectively\cite{li2019stackelberg,zhang2023learningbased}.

According to the vehicular twin migration process in Fig. \ref{framework}, the migration delay contains three parts: the transmission delay, the queue delay, and the re-instantiation delay. 1) The transmission delay from MSP $i$ to MRP $j$ can be formulated as $t_{ij}^{tran}=\frac{D_i}{r_{ij}}$. 2) Based on the queuing theory \cite{li2019stackelberg}, we consider that each MRP re-instantiation queue is assumed as an M/M/1 queue model, the queue delay is calculated as $t_{ij}^{que}=\frac{\lambda_{j}}{\mu_{j}(\mu_{j}-\lambda_{j})}$, where $\mu_{j}$ represents the processing rate of MRP $j$ and $\lambda_j$ denotes the computation task arriving rate. 3) The re-instantiation delay is formulated as $t_{ij}^{com}=\frac{L_i}{f_j}$, where $f_j$ is the computation capability of edge server on MRP $j$. Therefore, the total migration delay of task $J_i$ can be formulated as
$T_{ij} = t_{ij}^{tran} + t_{ij}^{que} + t_{ij}^{com}$.

\subsection{Utility Modeling of MSPs and MRPs}
Motivated by the previous work \cite{xu2021privacy,huang2022joint}, we consider that each MSP follows a probabilistic decision-making model to select the MRP for bandwidth. The pairing probability $\theta_{ij}$ between MSP $i$ and MRP $j$ is proportional to the reciprocal of the bandwidth selling price of MRP $j$, which can be expressed as $\theta_{ij}=\frac{\frac1{p_j}}{\sum_{l\in\mathcal{M}}\frac1{p_l}}$,
where $p_j > 0$ represents the bandwidth selling price of MRP $j$. To clarify, we first introduce some symbols here. The selling price of all MRPs is defined as a vector $\boldsymbol{p}=\left\{p_{j}\right\}_{j \in \mathcal{M}}$ and the bandwidth demand of all MSPs is defined as a vector $\boldsymbol{b}=\left\{b_{i}\right\}_{i \in \mathcal{N}}$, where $b_{i} = \left\{b_{ij}\right\}_{j \in \mathcal{M}}$ represents the vector of bandwidth purchasing from all MRPs.
\subsubsection{The Utility Function of MSP}
MSPs procure bandwidth resources from MRPs with the aim of minimizing migration transmission delay and enhancing the seamless immersive experience during VT migration. Consistent with approaches such as \cite{candogan2012optimal, nie2020multi}, we define the immersion function for MSP $i$ as a linear-quadratic function: $S(b_{i}) = \alpha_ib_{ij}-\beta_ib_{ij}^2$. Here, $\alpha_i$ represents the maximum internal satisfaction degree, and $\beta_i$ denotes the sensitivity when purchasing $b_{ij}$ bandwidth, where $\alpha_i > 0$ and $\beta_i > 0$. It's worth noting that this concave linear-quadratic function implies decreasing marginal returns, signifying that satisfaction benefits increase at a diminishing rate. After finishing VT migration, the MSP pays the MRP according to the service level agreement (SLA). We define the payment function as $C(b_{i}) = b_{ij}p_j$.

Additionally, we focus on the positive social effect among MSPs \cite{candogan2012optimal,li2019stackelberg,chu2023dynamic}, that is, one MSP's increased bandwidth purchases can have a beneficial impact on his peers. For instance, VTs engaged in AR mobile games like Pokémon Go and Pikmin Bloom can share the same Metaverse service function as Google Maps API. This sharing reduces resource consumption at the destination, ultimately enhancing the service experience for all MSPs. The social ties among MSPs are represented as an adjacency matrix $\mathbf{W}=[w_{ik}]~{i,k\in\mathcal{N}}$, where the entry in the $i$ th row and $k$ th column, $w_{ik}$, signifies the influence that MSP $i$ has on MSP $k$. This matrix is calculated based on historical average interactions and remains fixed throughout this paper. Similar to \cite{li2019stackelberg}, we assume that the influences between MSP $i$ and MSP $k$ are mutual and identical, i.e., $w_{ik} = w_{ki}$. A higher value of $w_{ik}$ signifies a closer relationship and a more significant influence on each other's decisions \cite{nie2020multi}. Consequently, given the bandwidth demand profile for the remaining providers except MSP $i$, we model this positive social effect as the external benefits for MSP $i$, denoted as $\Phi (b_{i},b_{-i}) = \sum_{k\in\mathcal{N}}w_{ik}b_{ij}b_{kj}$.
To sum up, the utility of MSP $i$ is formulated as follows:
\begin{equation}
    \label{MSP}
    U_{F_i} = \sum_{j\in\mathcal{M}} \frac{\frac1{p_j}}{\sum\limits_{l\in\mathcal{M}}\frac1{p_l}} (\alpha_ib_{ij}-\beta_ib_{ij}^2+
    \sum_{k\in\mathcal{N}}w_{ik}b_{ij}b_{kj}-b_{ij}p_j)
\end{equation}
\subsubsection{The Utility Function of MRP}
Given the pairing probability $\theta_{ij}$, the utility of the MRP is determined as the difference between the sum of bandwidth fees paid by paired MSPs and the cost of processing the VT migration task. Hence, we formulate the utility function of MRP $j$ as:
\begin{equation} \label{MRP}
    U_{L_j} =\sum_{i\in\mathcal{N}} \frac{\frac1{p_j}}{\sum_{l\in\mathcal{M}}\frac1{p_l}} (b_{ij}p_j-b_{ij}c_j)
\end{equation}
where $c_j > 0$ is the unit migration cost of bandwidth including transmission and re-instantiation cost. Each MRP $j \in \mathcal{M}$ adjusts its selling price $p_j$ to attract more MSPs purchasing their bandwidth resource and maximizes its own utility $U_{L_j}$. Note that $U_{L_j}$ is influenced by both MSPs and other MRPs, i.e., there is a competitive game among the MRPs.

\subsection{Stackelberg Game Formulation}
In the proposed MVM model, both MSPs and MRPs need to decide on optimal bandwidth purchase strategies and selling price to maximize their respective utilities. Additionally, the optimization problems of MSPs and MRPs are intricately linked. Game theory emerges as a promising tool for studying decentralized decision-making among participants in resource trading scenarios. We formulated the interaction between MSPs and MRPs as a multi-leader and multi-follower (MLMF) Stackelberg game with two stages. In the leader subgame, MRPs decide their selling price of bandwidth resources first. In the follower subgame, MSPs determine their bandwidth demand based on their VT migration tasks and MRPs' unit selling price of bandwidth.
\begin{figure*}[!h]
    \hrulefill
    \begin{align}
         & \delta\mathcal{H}_{ij}(\mathbf{B})>\mathcal{H}_{ij}(\delta\mathbf{B}) \label{scalability}
        =\frac{\delta\alpha_i + \delta\sum_{k\in\mathcal{N}}w_{ik}b_{kj}-\delta p_j}{2\beta _i}
        -\frac{\alpha_i + \delta\sum_{k\in\mathcal{N}}w_{ik}b_{kj}-p_j}{2\beta _i}
        = \frac{(\delta-1) (\alpha_i - p_j)}{2\beta _i} >0.                                          \\
         & \frac{\partial U_{L_j}}{\partial p_{j}}=\label{Uj1}
        \sum_{i=N}\frac{{\alpha_i + \sum_{k\in\mathcal{N}}w_{ik}b_{kj}}{}-2p_{j}-p_j^{2}\sum_{l\in\mathcal{M}/j}\frac{1}{p_l}+c_j+c_j{(\alpha_i + \sum_{k\in\mathcal{N}}w_{ik}b_{kj})}{}\sum_{l\in\mathcal{M}/j}\frac{1}{p_l}}
        {2\beta _i(1+p_j\sum_{l\in\mathcal{M}/j}\frac{1}{p_l})^{2}}                                  \\
         & \frac{\partial^{2} U_{L_j}}{\partial p_{j}^{2}}=\label{Uj2}
        \sum_{i=N}
        \frac{-2-2{(\alpha_i + \sum_{k\in\mathcal{N}}w_{ik}b_{kj})}{}
        \sum_{l\in\mathcal{M}/j}\frac{1}{p_l} - 2c_j\sum_{l\in\mathcal{M}/j}\frac{1}{p_l} -2c_j{(\alpha_i + \sum_{k\in\mathcal{N}}w_{ik}b_{kj})}{}\sum_{l\in\mathcal{M}/j}\frac{1}{p_l}^2}
        {2\beta _i(1+ p_j\sum_{l\in\mathcal{M}/j}\frac{1}{p_l})^{3}}<0
    \end{align}
\end{figure*}
Given the bandwidth price profile for all MRPs $\boldsymbol{p}$ and the bandwidth demand profile of other MSPs (i.e., $\boldsymbol{B}_{-i}$), the follower-level problem is formulated as follows:

\begin{equation}
    \begin{aligned}
        \textbf{Problem 1:}~\max~ & U_{F_i}(b_i,\boldsymbol{B}_{-i},\boldsymbol{P})      \\
        \text{s.t. }              & b_{ij} \ge 0                                         \\
                                  & \sum_{j\in\mathcal{M}}\theta_{ij}T_{ij}\le K^{max}_i \\
    \end{aligned}
\end{equation}
where the total migration delay mentioned should not exceed the maximum tolerance $K^{max}_i$. At the leader-level, the MRP maximizes its utility according to the pricing strategy of all MRPs and the bandwidth purchase strategy of all MSPs as follows:
\begin{equation}
    \begin{aligned}
        \textbf{Problem 2:}~\max~ & U_{L_j}(p_j,\boldsymbol{P}_{-j},\boldsymbol{B}) \\
        \text{s.t. }              & p_j\in\left[c_j,p^{max}\right]                  \\
    \end{aligned}
\end{equation}
where $p^{max}$ is the upper bound of selling price $p_j$. When the selling price set by the MRP exceeds the expectations of MSPs, none of the MSPs are willing to pay for the bandwidth.

Thus, $\textbf{Problem 1}$ and $\textbf{Problem 2}$ jointly constitute a Stackelberg game. Our goal is to find the SE to achieve the optimal solution for the formulated game. In this SE, the utilities of MRPs are maximized, assuming that MSPs obtain their best response. Furthermore, neither any MRP nor any MSP can enhance their individual utility by deviating from their respective strategies \cite{jiang2022reliable,huang2022joint}. The SE for our proposed MVM model is represented as follows:

\begin{definition}
    (Stackelberg Equilibrium, SE): Let $\boldsymbol{B^{*}}=\left\{b_{i}^*\right\}_{i \in \mathcal{N}}$ and $\boldsymbol{P^{*}}=\left\{p_{j}^{*}\right\}_{j \in \mathcal{M}}$ represent the optimal bandwidth demand strategy and the optimal bandwidth selling price, respectively. Let $\boldsymbol{B_{-i}^{*}}$ denote the optimal bandwidth demand strategies of all other MSP except $i$ and $\boldsymbol{P_{-i}^{*}}$ denotes the optimal bandwidth selling price strategies of all other MRP except $j$. Then, the stable point $(\boldsymbol{B^{*}}, \boldsymbol{P^{*}})$ can be SE if the following inequalities is strictly satisfied \cite{huang2022joint,zhang2023learningbased}:
    \begin{equation}
        \left\{\begin{array}{l}U_{F_i}(b_i^{*},\boldsymbol{B_{-i}}^{*},\boldsymbol{P}^{*}) \geq U_{F_i}(b_i,\boldsymbol{B_{-i}}^{*},\boldsymbol{P}^{*}),\:\forall i \in \mathcal{N},\vspace{0.05in} \\
            U_{L_j}(p_j^{*},\boldsymbol{P_{-j}}^{*},\boldsymbol{B}^{*})\geq U_{L_j}(p_j,\boldsymbol{P_{-j}}^{*},\boldsymbol{B}^{*}),\:\forall j \in \mathcal{M}.\end{array}\right.
    \end{equation}
\end{definition}

\section{Equilibrium Analysis for MLMF Stackelberg Game Based on ADMM Algorithm}


\subsection{Analysis of the Follower-level Game} \label{followergame}
In the follower-level game, each MSP $i$ adjusts its bandwidth demand to maximize its utility based on all MRPs' price profiles $\boldsymbol{p}$. The first-order and second-order derivatives of $U_{F_i}$ concerning $b_{ij}$ are derived as follows:

\begin{align}
    \frac{\partial U_{F_i}}{\partial b_{ij}}         & =\frac{\frac1{p_j}}{\sum_{l\in\mathcal{M}}\frac1{p_l}}(\alpha_i-2\beta_ib_{ij}+
    \sum_{k\in\mathcal{N}}w_{ik}b_{kj}-p_j)                                                                                                              \\
    \frac{\partial^{2} U_{F_i}}{\partial b_{ij}^{2}} & =-\frac{\frac1{p_j}}{\sum_{l\in\mathcal{M}}\frac1{p_l}}(2\beta _i)<0\label{follower second-order}
\end{align}
The negative second-order derivative presented in (\ref{follower second-order}) implies the quasi-concavity of the MSP's utility function $U_{F_i}$ with respect to $b_i$. Our proof of the Nash Equilibrium existence among MSPs stands. To establish its uniqueness, we examine the best response function of MSP. By applying the first-order optimality condition $\frac{\partial U_{F_i}}{\partial b_{ij}} = 0$, the best response function for MSP $i$ is expressed as follows:
\begin{equation}
    b_{ij}^{*}=\frac{\alpha_i + \sum_{k\in\mathcal{N}}w_{ik}b_{kj}-p_j}{2\beta _i}
\end{equation}
The follower-level sub-game exhibits a unique Nash Equilibrium if the best response function of MSP adheres to a standard function \cite{xu2021privacy}.
\begin{lemma} \label{standard}
    A function $\mathcal{H}_{ij}(\boldsymbol{B})$ is a standard function if and only if it satisfies the following three conditions:
    \begin{itemize}
        \item Positiveness: $\mathcal{H}_{ij}(\boldsymbol{B}) > 0$
        \item Monotonicity: $\forall \boldsymbol{B^{\prime}} > \boldsymbol{B}$, $\mathcal{H}_{ij}(\boldsymbol{B}^{\prime})>\mathcal{H}_{ij}(\boldsymbol{B})$
        \item Scalability: $\forall\delta >1,\delta\mathcal{H}_{ij}(\boldsymbol{B})>\mathcal{H}_{ij}(\boldsymbol{\delta B})$
    \end{itemize}
\end{lemma}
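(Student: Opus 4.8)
The plan is to establish the claim by identifying $\mathcal{H}_{ij}(\boldsymbol{B})$ with the closed-form best response function $b_{ij}^{*}=\frac{\alpha_i + \sum_{k\in\mathcal{N}}w_{ik}b_{kj}-p_j}{2\beta_i}$ derived above, and then verifying term by term that it satisfies the three defining properties of a standard function in Yates' sense \cite{xu2021privacy}. The lemma statement itself is definitional, so the substance of the argument is this verification: once positiveness, monotonicity, and scalability all hold, the standard-function fixed-point result immediately guarantees that the follower-level best-response iteration admits a unique fixed point, i.e., the Nash Equilibrium among the MSPs is unique.

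For \emph{positiveness}, I would argue that the numerator $\alpha_i + \sum_{k\in\mathcal{N}}w_{ik}b_{kj}-p_j$ is strictly positive. Since the bandwidth demands satisfy $b_{kj}\ge 0$ and the social-tie weights obey $w_{ik}\ge 0$, the summation term is non-negative, so it suffices to invoke the natural economic assumption $\alpha_i>p_j$, which simply says an MSP purchases bandwidth only when its maximum internal satisfaction exceeds the unit price (otherwise the unconstrained optimum would be clipped to zero by the constraint $b_{ij}\ge 0$ in Problem 1). Combined with $\beta_i>0$, this yields $\mathcal{H}_{ij}(\boldsymbol{B})>0$.

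For \emph{monotonicity}, I would observe that $\mathcal{H}_{ij}$ is affine in the remaining demands $b_{kj}$ with non-negative coefficients $\frac{w_{ik}}{2\beta_i}\ge 0$, so any componentwise increase $\boldsymbol{B}'>\boldsymbol{B}$ strictly raises the summation and hence $\mathcal{H}_{ij}(\boldsymbol{B}')>\mathcal{H}_{ij}(\boldsymbol{B})$. For \emph{scalability}, the verification is the direct computation already recorded in (\ref{scalability}): forming $\delta\mathcal{H}_{ij}(\boldsymbol{B})-\mathcal{H}_{ij}(\delta\boldsymbol{B})$ cancels the terms linear in $\boldsymbol{B}$ (the social-effect summation), leaving $\frac{(\delta-1)(\alpha_i-p_j)}{2\beta_i}$, which is strictly positive for every $\delta>1$ under the same assumption $\alpha_i>p_j$.

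The main obstacle is not algebraic but consists in pinning down the sign conditions that make positiveness and scalability go through, since both rely crucially on $\alpha_i>p_j$ and on $w_{ik}\ge 0$; the cleanest exposition would state these as standing assumptions (or verify that they hold at every active price in the feasible range $[c_j,p^{max}]$) before the three checks, rather than leaving them implicit. The one delicate point is the boundary case $\alpha_i\le p_j$, where the unconstrained best response is negative and the projection onto $b_{ij}\ge 0$ becomes active; I would flag this as the place needing care, handling it either by excluding it via the economic assumption or by arguing separately that the standard-function property is inherited by the projected best-response map.
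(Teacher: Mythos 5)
Your proposal is correct and matches the paper's treatment: the lemma is indeed taken as definitional (a citation of the Yates-type standard-function framework via \cite{xu2021privacy}, stated without proof), and the substantive verification you outline---positiveness, monotonicity, and scalability of the clipped best response $\psi_{ij}$ under the standing assumption $\alpha_i > p_j$, with scalability given by the computation in (\ref{scalability})---is precisely what the paper does in the proof of Theorem~\ref{ap}. The only minor differences are that the paper derives positiveness from Individual Rationality ($U_{F_i}\ge 0$, then a simple transform) rather than directly from $\alpha_i>p_j$ together with $w_{ik}\ge 0$ and $b_{kj}\ge 0$, and that it dismisses the clipped branch $\psi_{ij}<0$ with a one-line remark, whereas you flag that boundary case for explicit care.
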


\begin{theorem}\label{ap}
    If $a_i > p_j$ is satisfied, the sub-game perfect equilibrium in the MSPs' sub-game is unique.
\end{theorem}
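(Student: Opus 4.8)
The plan is to reduce the uniqueness claim to Lemma~\ref{standard} by showing that the best response function derived above, namely $\mathcal{H}_{ij}(\boldsymbol{B}) := b_{ij}^{*} = \frac{\alpha_i + \sum_{k\in\mathcal{N}}w_{ik}b_{kj} - p_j}{2\beta_i}$, is a standard function. By the standard-function fixed-point result \cite{xu2021privacy}, a standard best response admits a unique fixed point, which is precisely the unique Nash Equilibrium of the follower sub-game. Since existence of an equilibrium has already been secured by the strict quasi-concavity of $U_{F_i}$ (the negative second-order derivative in (\ref{follower second-order})), establishing that $\mathcal{H}_{ij}$ is standard will complete the proof that the MSPs' sub-game perfect equilibrium is unique. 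I will therefore verify the three conditions of Lemma~\ref{standard} in turn, using throughout the feasibility constraint $b_{ij}\ge 0$ from Problem~1 and the non-negativity of the social weights $w_{ik}\ge 0$.

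First I would verify positiveness. Because $\beta_i>0$ and the social coupling $\sum_{k\in\mathcal{N}}w_{ik}b_{kj}$ is a sum of products of non-negative weights and non-negative demands, that term is non-negative; combined with the hypothesis $\alpha_i>p_j$ this forces the numerator to satisfy $\alpha_i + \sum_{k}w_{ik}b_{kj} - p_j \ge \alpha_i - p_j > 0$, so $\mathcal{H}_{ij}(\boldsymbol{B})>0$. Next, for monotonicity, I would observe that $\mathcal{H}_{ij}$ depends on $\boldsymbol{B}$ only through the term $\sum_{k}w_{ik}b_{kj}$ with non-negative coefficients, so any component-wise increase $\boldsymbol{B}'>\boldsymbol{B}$ can only raise the numerator, yielding $\mathcal{H}_{ij}(\boldsymbol{B}')\ge\mathcal{H}_{ij}(\boldsymbol{B})$.

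The scalability condition is the crux and is exactly the inequality displayed in (\ref{scalability}). Here I would substitute $\delta\boldsymbol{B}$ into $\mathcal{H}_{ij}$ and note that the linear social term $\delta\sum_{k}w_{ik}b_{kj}$ appears identically in both $\delta\mathcal{H}_{ij}(\boldsymbol{B})$ and $\mathcal{H}_{ij}(\delta\boldsymbol{B})$ and therefore cancels, leaving $\delta\mathcal{H}_{ij}(\boldsymbol{B}) - \mathcal{H}_{ij}(\delta\boldsymbol{B}) = \frac{(\delta-1)(\alpha_i-p_j)}{2\beta_i}$. This quantity is strictly positive for every $\delta>1$ precisely because $\alpha_i>p_j$, which establishes the required strict inequality $\delta\mathcal{H}_{ij}(\boldsymbol{B})>\mathcal{H}_{ij}(\delta\boldsymbol{B})$.

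The main obstacle, and the reason the hypothesis $\alpha_i>p_j$ is imposed, is making positiveness and scalability hold simultaneously: the constant internal-satisfaction parameter $\alpha_i$ and the price $p_j$ do not scale with $\delta$, whereas the social term does, so the sign of the scaling difference is governed entirely by $\alpha_i-p_j$, and the same gap is what keeps the numerator positive independently of the opponents' demands. A secondary subtlety I would flag is that the monotonicity condition is stated strictly, whereas non-negative (possibly zero) weights $w_{ik}$ only deliver the non-strict version; I would remark that this non-strict monotonicity is the condition actually required by the standard-function uniqueness theorem of \cite{xu2021privacy}, so the argument goes through unchanged. Invoking Lemma~\ref{standard} then yields a unique fixed point of the best response, and hence a unique sub-game perfect equilibrium in the MSPs' sub-game.
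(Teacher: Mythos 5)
Your proposal is correct and follows essentially the same route as the paper: both reduce uniqueness to Lemma~\ref{standard} by verifying positiveness, monotonicity, and scalability of the best response $\psi_{ij}$, with the scalability step being exactly the computation displayed in (\ref{scalability}). The only minor difference is that the paper justifies positiveness via Individual Rationality ($U_{F_i}\ge 0$) while you derive it directly from the hypothesis $\alpha_i > p_j$ together with the non-negativity of the social term, which is an equally valid (and arguably cleaner) justification that also explains why the zero branch of the piecewise best response never activates.
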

\begin{proof}

    Because $b_i \ge 0$, we further obtain the best response function
    \begin{equation} \label{bestMSP}
        b_{ij}^*=\mathcal{H}_{ij}(\boldsymbol{B})=
        \begin{cases}0,         & \psi_{ij}<0    \\
             \psi_{ij}, & \psi_{ij}\ge 0\end{cases}
    \end{equation}
    where we have $\psi_{ij} =  \frac{\alpha_i + \sum_{k\in\mathcal{N}}w_{ik}b_{kj}-p_j}{2\beta _i} $.

    According to \textbf{Lemma} \ref{standard}, if the best response function given in (\ref{bestMSP}) satisfies positiveness, monotonicity, and scalability, we can prove the uniqueness of MSPs' sub-game. Obviously, these three properties are satisfied at the lower bound, i.e., $\mathcal{H}_{ij}(\boldsymbol{B}) = 0$ $(\psi_{ij}<0)$. Then we assess whether $\mathcal{H}_{ij}(\boldsymbol{b}) = \psi_{ij}$ is a standard function. Firstly, due to the Individual Rationality (IR) \cite{wen2023freshness}, each MSP will obtain a non-negative utility while requesting VT migration task, i.e., $U_{F_i} \ge 0$.
    For positiveness, we can easily get $\alpha_i + \sum_{k\in\mathcal{N}}w_{ik}b_{kj}-p_j \ge 0$ with simple transform according to $U_{F_i} \ge 0$. Thus, we have $\mathcal{H}_{ij}(\mathbf{B})>0$ due to the positiveness of $\alpha_i + \sum_{k\in\mathcal{N}}w_{ik}b_{kj}-p_j$ and $\beta_i$. Secondly, we show the monotonicity of $\psi_i$. Let $\boldsymbol{B^{\prime}} > \boldsymbol{B}$, we have $\sum_{k\in\mathcal{N}}w_{ik}b_{kj}^{\prime } > \sum_{k\in\mathcal{N}}w_{ik}b_{kj}$, which proves the monotonicity condition. We consider that the satisfaction coefficient $\alpha_i$ will be no less than the charge of bandwidth resource $p_j$. For scalability, it can be seen from (\ref{scalability}) that $\psi_{ij}$ satisfies scalability according to Theorem \ref{ap} and $\delta >1$. In summary, the best response function of MSP satisfies three properties characteristic of a standard function. Consequently, we prove both the existence and uniqueness of the follower-level Nash Equilibrium.
\end{proof}
\subsection{Analysis of the Leader-level Game}

\begin{theorem}
    The unique Stackelberg Equilibrium denoted as $(\boldsymbol{B^{}}, \boldsymbol{P^{}})$ is established in the formulated game if $a_i > p_j$, where both the bandwidth demand strategies and bandwidth price strategies of the MSPs and MRPs are optimized.
\end{theorem}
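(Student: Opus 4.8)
The plan is to prove the result by backward induction, solving the follower sub-game first and then feeding its unique solution into the leader sub-game. The follower side is already settled: by Theorem~\ref{ap}, for every fixed price profile $\boldsymbol{P}$ the MSPs' sub-game admits a unique Nash Equilibrium $\boldsymbol{B}^{*}(\boldsymbol{P})$, since each best response $b_{ij}^{*}$ is a standard function under the assumption $\alpha_i > p_j$. The first step is therefore to substitute this closed-form best response $b_{ij}^{*}=\psi_{ij}$ back into the leader utilities $U_{L_j}$ in~(\ref{MRP}), collapsing the two-stage game into a single-stage non-cooperative pricing game played only among the $M$ MRPs over the compact convex strategy sets $p_j \in [c_j, p^{max}]$.

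Second, I would establish that this reduced pricing game is a concave game. Differentiating the composed utility twice with respect to $p_j$ yields expression~(\ref{Uj2}), which is strictly negative throughout $[c_j, p^{max}]$; hence each $U_{L_j}$ is strictly concave in its own price $p_j$ for any fixed $\boldsymbol{P}_{-j}$. Together with continuity of $U_{L_j}$ and compactness and convexity of the strategy sets, a standard fixed-point argument (the Debreu--Glicksberg--Fan theorem) guarantees that a pure-strategy leader-level Nash Equilibrium exists, and strict concavity makes each MRP's best-response price the unique stationary point of the first-order condition~(\ref{Uj1}).

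The main obstacle will be \emph{uniqueness} at the leader level, because the MRPs are coupled through the pairing-probability term $\sum_{l\in\mathcal{M}/j}\tfrac{1}{p_l}$, so strict concavity of each $U_{L_j}$ in its own variable alone does not preclude multiple equilibria. To close this gap I would follow the same standard-function route already used for the followers: extract the leader best-response price from~(\ref{Uj1}) and verify that, as a mapping of $\boldsymbol{P}_{-j}$, it satisfies positiveness, monotonicity, and scalability; alternatively, one can show that this price best-response mapping is a contraction or verify Rosen's diagonally strictly concave condition on the pseudo-gradient, using $\alpha_i > p_j$ and $p_j \ge c_j > 0$ to bound the cross-price coupling. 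Once the leader equilibrium $\boldsymbol{P}^{*}$ is shown to be unique, backward induction delivers a unique Stackelberg Equilibrium: $\boldsymbol{P}^{*}$ is the unique leader profile and $\boldsymbol{B}^{*}=\boldsymbol{B}^{*}(\boldsymbol{P}^{*})$ is the corresponding unique follower response, so the pair $(\boldsymbol{B}^{*},\boldsymbol{P}^{*})$ satisfies both inequalities of the SE definition and no single MSP or MRP can profitably deviate.
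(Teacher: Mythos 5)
Your proposal follows essentially the same route as the paper's own proof: backward induction, substitution of the follower best response $\psi_{ij}$ into $U_{L_j}$, concavity of the reduced leader utility via the sign of (\ref{Uj2}), and then a standard-function argument (positiveness, monotonicity, scalability) applied to the leader best-response map to obtain uniqueness of the leader-level equilibrium, combined with Theorem~\ref{ap} for the followers. You are in fact more candid than the paper about the one genuine gap --- the paper asserts the leader best response (\ref{bestMRP}) is a standard function but omits the verification, deferring to \cite{xu2021privacy}, which is exactly the step you flag as the main obstacle and propose to close by the same (or an alternative Rosen/contraction) argument.
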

\begin{proof}
    After each follower selects the optimal bandwidth demand strategy under any bandwidth price, the MRPs can maximize their utilities by adjusting the optimal $p_j$. By substituting (\ref{bestMSP}) into MRP's utility function (\ref{MRP}), we can derive
    \begin{equation}
        U_{L_j} =\sum_{i\in\mathcal{N}} \frac{\frac1{p_j}}{\sum_{l\in\mathcal{M}}\frac1{p_l}}
        (\frac{\alpha_i + \sum_{k\in\mathcal{N}}w_{ik}b_{kj}-p_j}{2\beta _i} )(p_j-c_j)
    \end{equation}
    By computing the first-order and second-order derivatives of $U_{L_j}$ with respect to $p_j$, we obtain Eq. (\ref{Uj1}) and Eq. (\ref{Uj2}). Due to the positiveness of every term in the numerator of $\frac{\partial^{2} U_{L_j}}{\partial p_{j}^{2}}$, we can easily derive that $\frac{\partial^{2} U_{L_j}}{\partial p_{j}^{2}} < 0$ and $U_{L_j}$ is a concave function. Subsequently, applying the first-order optimality condition $\frac{\partial U_{L_j}}{\partial p_{j}} = 0$ and considering the upper and lower bounds for setting the price, we deduce the best response of MRP $j$ as follows:
    \begin{equation}\label{bestMRP}
        p_{j}^*=\mathcal{F}_j(\boldsymbol{P})=
        \begin{cases}0,            & \varsigma _j<c_j               \\
             \varsigma _j, & 0\leq \varsigma _j\leq p^{max} \\
             p^{max},      & \varsigma_j>p^{max}\end{cases}
    \end{equation}
    where
    \begin{equation}
        \begin{aligned}
            \begin{split}
                \varsigma_j = & \sqrt{\frac{\sum\limits_{i\in\mathcal{N}}  \sum\limits_{\substack{l\in\mathcal{M}\\l\ne j}} \frac{c_j+p_l}{p_l}
                        \left({\alpha_i + \sum\limits_{k\in\mathcal{N}}w_{ik}b_{kj}\sum\limits_{\substack{l\in\mathcal{M}\\l\ne j}} \frac{1}{p_l}
                            +1}\right)-\frac{1}{2\beta_i}^{2}}
                    {\sum\limits_{\substack{l\in\mathcal{M}\\l\ne j}} \frac{1}{p_l}
                        ^{2}\cdot \frac{1}{2\beta_i}^{2}}}
            \end{split}
        \end{aligned}
    \end{equation}

    The non-cooperative leader sub-game attains a unique Nash Equilibrium if $\mathcal{F}_j(\boldsymbol{P})$ follows a standard function. As in Section \ref{followergame}, we assess the three properties: positiveness, monotonicity, and scalability when $c_j\leq \varsigma _i\leq p^{max}$. Due to space limitations in the paper, we omit the description of the proof for this section. For detailed information, please refer to \cite{xu2021privacy}. Thus, we can demonstrate that the MRP's best response function also adheres to a standard function, guaranteeing the uniqueness of the Nash Equilibrium. In summary, we conclusively affirm the existence and uniqueness of the formulated MLMF Stackelberg game, complemented by \textbf{Theorem} \ref{ap}.
\end{proof}

\subsection{ADMM-based Algorithm for SE}
The primary challenge in resolving the formulated multi-leader multi-follower Stackelberg game arises from the high dimensionality of each player's strategy space \cite{xiong2019cloud}. As analyzed in (\ref{bestMSP}) and (\ref{bestMRP}), the optimization of the follower subgame and the leader subgame is coupled. In this section, the ADMM algorithm is introduced to decompose the complex optimization problem into several subproblems for the SE. The following explanation outlines the iterative process of the ADMM-based algorithm for optimizing the utilities of MSPs and MRPs, with algorithm details shown in \textbf{Algorithm 1}.

\textbf{Inner Loop}: In the inner loop, each MSP $i \in \mathcal{N}$ calculates its bandwidth purchasing strategy $b_{ij}$ from each MRP $j$ to maximize its utility, leveraging the announced prices $p_j^{(q)}$ set at the beginning of each iteration $q$, where $q$ is the number of outer loop iterations. In each iteration, the process of MSP's strategy updating can be expressed as:
\begin{equation}
    \begin{aligned}
        b_{i}^{(q)}\left(t+1\right)=
        \arg \max
        U_{F_i}\left(b_{i}^{(q)}\right)+\sum_{j=1}^{\mathcal{M}}\eta_{i}^{(q)}\left(\hat{t} \right) \theta _{ij}T_{ij} \\
        +\frac{\rho}{2}\left\|\sum_{k=1, k \neq \mathrm{j}}^{\mathcal{M}} \theta _{ik}T_{ik}^{(q)}\left(\hat{t} \right)+\theta _{ij}T_{ij}-K_i^{\max }\right\|_{2}^{2}
    \end{aligned}
\end{equation}
where $\hat{t} = t + 1$ when $k < i$ and $\hat{t} = t$ when $k > i$. $||_{2}^{2}$ denotes the Frobenius norm, $\rho$ is the damping factor, and $t$ represents the index of the inner loop step. The Lagrangian multiplier $\eta$ undergoes updates according to the following expression \cite{nie2020multi,xiong2019cloud}:
\begin{equation} \eta_{i}^{(q)}\left(t+1\right)=\eta_{i}^{(q)}\left(t\right)+\rho\left(\sum_{j=1}^{\mathcal{M}} \theta _{ij}T_{ij}^{(q)}(t+1)-K_i^{\max }\right)
\end{equation}
The MSP's strategy $b_{i}$ undergoes updates using the two equations above until both $b_i$ and $\eta_{i}^{(q)}$ exhibit negligible changes.

\textbf{Outer Loop}: In the outer loop, the MRPs adjust their bandwidth pricing strategies according to the feedback of MSPs to maximize their profits. The iterative steps for the MRP $j$ using ADMM are given by $p_{j}^{(q)}\left(t+1\right)=\arg \max (U_{L_j}(p_j^{(q)}))$. Once all MRPs have adjusted their pricing strategies, they notify the MSPs to update their purchasing strategies, and the $(q+1)$ iteration begins. The stop condition of the iteration process is shown as follows:
\begin{equation}
    \left\|\sum_{j\in\mathcal{M}}U_{L_j}(p_j^{(q)})-\sum_{j\in\mathcal{M}}U_{L_j}(p_j^{(q-1)})\right\|\leq\Xi
\end{equation}
where $\Xi$ is a pre-determined threshold.
\begin{algorithm}[t]
    \small
    \caption{ADMM Algorithm for Stackelberg Game with Multi-Leader Multi-Follower}
    \label{ADMMM}
    \SetKwData{Left}{left}
    \SetKwData{This}{this}
    \SetKwData{Up}{up}
    \SetKwFunction{Union}{Union}
    \SetKwFunction{FindCompress}{FindCompress}
    \SetKwInOut{Input}{Input}
    \SetKwInOut{Output}{Output}
    \Input{$p_{j} \in\left[c_j, p_{\max }\right]$, precision threshold $\Xi$, $q=1$; }
    \Output{Optimal purchasing strategy of MSPs $\left(b_{i}\right)^{*}$;\\Optimal pricing strategy of MRPs $\left(p_{j}\right)^{*}$;}
    \While{$\left\|\sum_{j\in\mathcal{M}}U_{L_j}\left(p_{j}^{(q)}\right)-\sum_{j\in\mathcal{M}}U_{L_j}\left(p_{j}^{(q-1)}\right)\right\| \leq \Xi$}{
    Inner loop: Based on the pricing strategies given by MRPs $\boldsymbol{p}$, MSPs iteratively update the purchasing strategies to obtain the amount of bandwidth $\left(b_{i}\right)^{q}$ that maximize $U_{F_i}$;

    Outer loop: Based on the purchasing strategy given by MSPs $\boldsymbol{B}$, the MRPs achieve the optimal bandwidth price $\left(p_{j}\right)^{q}$ that maximize $U_{L_j}$;

    $q = q + 1$;
    }

\end{algorithm}

\section{Multi Agent Deep Reinforcement Learning-Based Approach for Stackelberg Game}
Conventional heuristic algorithms like backward induction, analytic methods, and iterative techniques can be employed to seek the optimal solution in the formulated game \cite{xu2021privacy}. Nonetheless, applying these centralized algorithms in a real game environment is impractical for the following reasons. On the one hand, heuristic algorithms require full information about the game environment. However, due to the non-cooperative relationship, each game player is not willing to disclose its private information such as the migration data size $D_i$, the maximum tolerant delay $K_i^{max}$, the migration cost of bandwidth $c_j$ and so on, for secure concerns. On the other hand, heuristic algorithms are often one-off algorithms with high time complexity. A slight change of the game model will cause the algorithm to re-iterate for the optimal solution\cite{xu2021multiagent}.

Fortunately, Deep Reinforcement Learning (DRL) emerges as a valuable tool for learning optimal policies based on past historical experiences, leveraging the current state and provided rewards without the need for prior information. Moreover, it is challenging to solve the proposed optimization problems using traditional single-agent approaches due to the dynamic environment in Vehicular Metaverses. To obtain the SE practically, we first describe how to transform the proposed MLMF Stackelberg game into a learning task and design a MADRL approach to explore the optimal solution for both MSPs and MRPs.
\begin{figure*}[t]
    \centering{\includegraphics[width=0.90\textwidth]{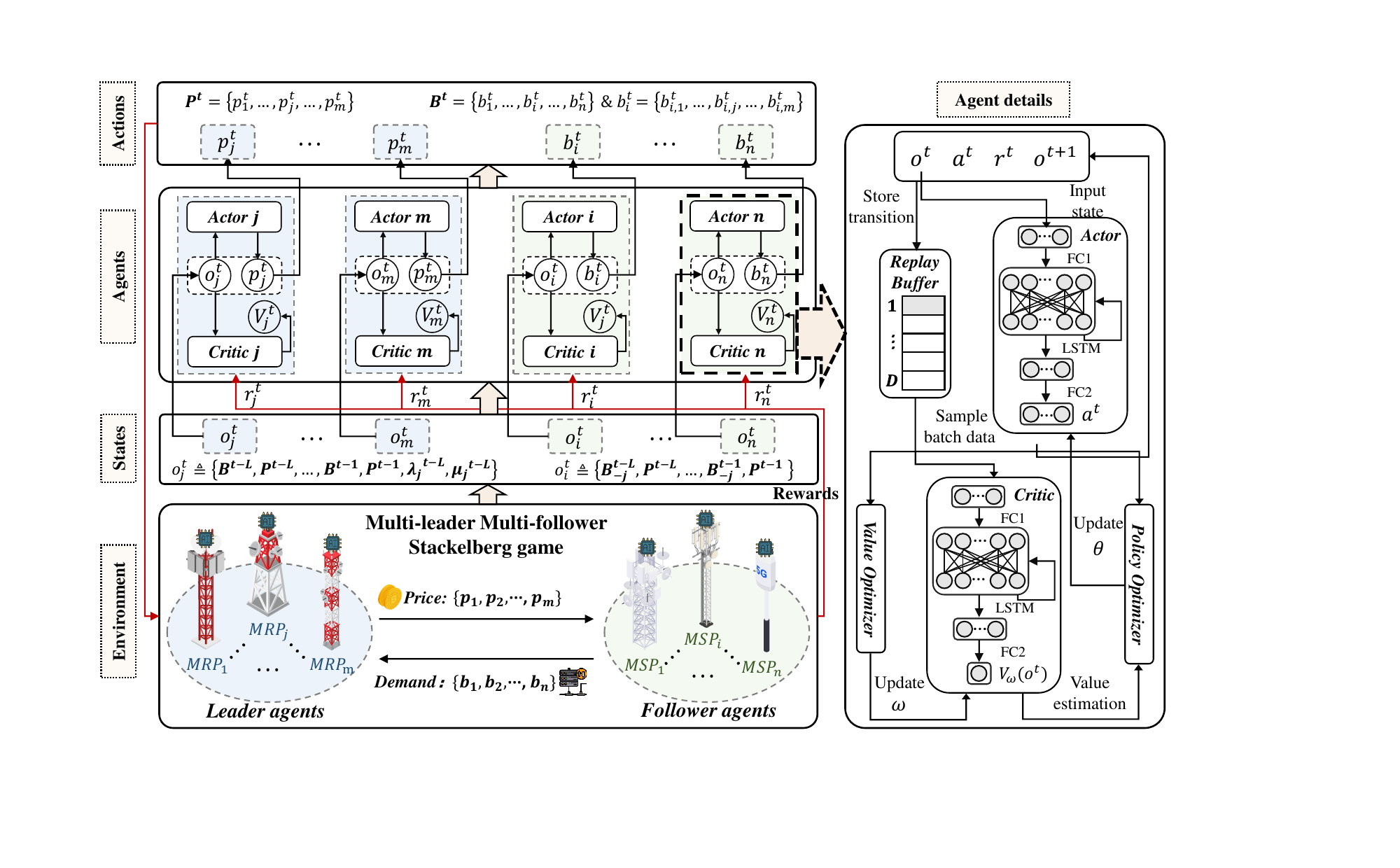}}
    \caption{The workflow of using MALPPO algorithm for MLMF Stackelberg game in vehicular twin migration.}\label{MALPPO}
\end{figure*}

\subsection{Deep Reinforcement Learning for Stackelberg Game}
To leverage DRL, we begin by representing the MLMF Stackelberg game as a multi-agent Partially Observable Markov Decision Process (POMDP) denoted by the tuple $\langle\mathcal{S},\mathcal{O},\mathcal{A},\mathcal{T},\mathcal{R}\rangle $, encompassing the state space, observation space, action space, state transition probability function set, and reward function, respectively. The MSPs and MRPs are intelligent agents competing with each other to maximize their rewards. We describe the detailed definition of each term as follows.

\textbf{\emph{1) State Space:}} In the current decision round $t$ $(t= \small\{1, 2,\ldots, T\small\})$, the state space is defined as a union of the current MRPs' pricing strategies $\boldsymbol{P}^{t}$, the MSPs' bandwidth demand strategies $\boldsymbol{B}^{t}$, the task arriving rate $\Lambda^t = \{\lambda_{1}^{t},\ldots,\lambda_{j}^{t},\ldots,\lambda_{m}^{t}\}$ and the task processing rate $\boldsymbol{\mu}^t = \{\mu_{1}^{t},\ldots,\mu_{j}^{t},\ldots,\mu_{m}^{t}\}$. Formally, $
    S^t\triangleq\left\{\boldsymbol{P}^{t},\boldsymbol{B}^{t},\Lambda^t,\boldsymbol{\mu}^t\right\}
$.

\textbf{\emph{2) Partially Observable Space:} } Due to privacy protection reasons, each agent is unable to observe the complete state of the environment. In VT migration, agents make decisions solely based on their local observations within a formulated partially observable space. At the beginning of each training round $t$, MRP $j$ first decides the pricing policy $p_j^t$ according to the historical pricing strategies of MRPs, the historical bandwidth demand strategies of MSPs in the past $L$ rounds, the task arriving rate $\lambda_j^t$ and the task processing rate $\mu_{j}^t$. Thus, the observation space of MRP $j$ can be represented as $o_j^t\triangleq\left\{\boldsymbol{B}^{t-L},\boldsymbol{P}^{t-L},\ldots,\boldsymbol{B}^{t-1},\boldsymbol{P}^{t-1},\lambda_j^{t},\mu_{j}^{t}\right\}$.

In the second stage, the follower MSP $i$ determines its bandwidth purchase policy $b_i^t$ based on the historical pricing strategies of MRPs and the historical bandwidth demand strategies of the other MSPs, where $\boldsymbol{B}_{-j}^{t} = \{b_{1,1}^{t},\ldots,b_{i-1,m}^{t},\ldots,b_{i+1,1}^{t},\ldots,b_{n,m}^{t}\}$. Thus, the observation space of MSP $i$ is represented as $o_i^t\triangleq\left\{\boldsymbol{B}_{-j}^{t-L},\boldsymbol{P}^{t-L},\ldots,\boldsymbol{B}_{-j}^{t-1},\boldsymbol{P}^{t-1}\right\}$. $\boldsymbol{P}^{t-L}$ and $\boldsymbol{B}^{t-L}$ are randomly generated in the initial phase when $t<L$. This historical experience facilitates the agents in understanding how their strategies evolve and influence the game outcome in the current time slot \cite{zhang2023learningbased}.

\textbf{\emph{3) Action Space:} } Upon receiving the observation $o_j^t$, the MRP agent $j$ is required to undertake a pricing action $p_j^t$ aimed at maximizing utility. Considering the migration cost $c_j$ and the upper bound price $p^{max}$ for the pricing action, the action space is defined as $p_j^t \in [c_j, p^{max}]$. Different from the MRP, the MSP determines a bandwidth demand vector $b_i^{t}$ instead of a single action after receiving the observation $o_i^t$.

\textbf{\emph{4) Reward Function:}} Following the actions taken by all agents, the system state undergoes a transition based on the state transition probability function $\mathcal{T}$. Subsequently, each agent receives an immediate reward corresponding to the current state and the actions taken. The reward functions for MRPs and MSPs, aligned with the utility functions in (\ref{MRP}) and (\ref{MSP}), can be defined as follows:

\begin{align}
    \label{r_MSP}
    R(o_j^t,p_j^{t}) & = U_{L_j}(p_j^t,\boldsymbol{P}_{-j}^t,\boldsymbol{B}^t), \\
    R(o_i^t,b_i^{t}) & =
    \begin{cases}
        U_{F_i}(b_i^t,\boldsymbol{B}_{-i}^t,\boldsymbol{P}^t), & \sum_{j\in\mathcal{M}}\theta_{ij}T_{ij}^t \le K^{max}_i, \\[2ex]
        0,                                                     & \sum_{j\in\mathcal{M}}\theta_{ij}T_{ij}^t >  K^{max}_i.
    \end{cases}
\end{align}
\begin{algorithm}[t]
    \small
    \label{MALPPOcode}
    \caption{MALPPO-based Solution for MLMF Stackelberg Game under privacy protection}
    Initialize maximum episodes $E$, maximum time slots $T$ in an episode, update times $\mathcal{L} $ and batch size $B$\;
    \For{Agent $k \in \mathcal{N} \cap \mathcal{M}$}
    {
        Initialize $\pi_{\theta_k}$, $V_{\omega_{k}}$, $\alpha$, $\beta$, $\gamma$ and $\epsilon$.
    }
    \For{Episode $e \in 1,\ldots,E $}
    {
    Reset Stackelberg game environment state $S_0$ and clear up the replay buffer $\mathcal{B F}_k$ \;
    \For{Time slot $t \in 1,\ldots,T$}
    {
    Input $o_j^{t}$ into MRP's actor policy $\pi_{\boldsymbol{\theta}_j}$ and determine the current price strategy $p_j^t$\;
    Input $o_i^{t}$ into MSP's actor policy $\pi_{\boldsymbol{\theta}_i}$ and determine the current bandwidth demand strategy $b_{i}^{t}$\;
    Update $S_t$ to $S_{t+1}$ and calculate rewards $R_j^t$ for MRP $j$ and $R_i^t$ for MSP $i$ through (\ref{MRP}) and (\ref{MSP}).\;
    Store transition $(o_k^t,a_k^t,R_k^t,o_{k+1}^t)$ into
    $\mathcal{B F}_k$\;
    \If{ $t\%\left|B\right|==0$}
    {
    \For{$l \in 1,\ldots,\mathcal{L}  $}
    {
    \For{Agent $k \in \mathcal{N} \cap \mathcal{M}$}{
    Sample a random mini-batch of data with a size $\left|B\right|$ from $\mathcal{B F}_k$\;
    Calculate the loss using (\ref{actorloss}) and (\ref{criticloss})\;
    Update $\nabla_{\theta_k}L_\pi$ and $\nabla_{{\omega}_k}L_{V}$ via (\ref{actorupdate}) and (\ref{criticupdate})\;
    }
    }
    }
    }
    }
\end{algorithm}

\subsection{Algorithm Design}
As illustrated in Fig. \ref{MALPPO}, the Stackelberg game environment consists of $M$ MRPs serving as leaders and $N$ MSPs acting as followers. Each player is an intelligent agent employing a DRL controller to make decisions. The MRP agent utilizes its DRL controller to determine the bandwidth selling price, while the MSP agent leverages its DRL controller to decide the bandwidth amount to purchase from each MRP. The detailed components of the DRL controller are illustrated on the right side of Fig. \ref{MALPPO}. Each controller $k$ comprises an actor network $\pi_{\theta_k}$, a critic network $V_{\omega_{k}}$, a replay buffer $\mathcal{BF}_k$, policy optimizer, and value optimizer. Here, $\theta_k$ and $\omega_{k}$ denote the parameters of the actor and critic, respectively.

Proximal Policy Optimization (PPO) stands as a model-free, on-policy, and policy-based algorithm proposed by OpenAI in 2017, and it has demonstrated excellent performance in many resource allocation tasks \cite{xu2021multiagent,zhan2020learning}. In \cite{yu2022surprising}, the Multi-Agent Proximal Policy Optimization (MAPPO) is introduced to extend PPO to multi-agent reinforcement learning problems. However, MAPPO assumes information sharing among agents and employs a training approach called Centralized Training and Decentralized Execution (CTDE). During training, it maintains a global critic network to evaluate the state values of all agents. This training method is not suitable for our privacy protection scenario. Therefore, our approach adopts a fully decentralized training method with a decentralized actor and a decentralized crtitc for each agent.

Moreover, to expedite convergence and alleviate the risk of getting stuck in local optima during training, we employ Long Short-Term Memory (LSTM) networks as the foundational architecture for both the actor and critic. Both the actor and the critic network consist of two multi-layer perceptrons (MLPs). Local observation variables serve as input to the MLP, extracting feature vectors that are subsequently fed into the LSTM network with two hidden layers. The memory units within the LSTM network dynamically manage information retention and forgetting. Finally, the output is obtained through two MLPs for actions or state values. Previous works have demonstrated the effectiveness of recurrent neural networks like LSTM in handling POMDP environments \cite{zhan2020learning}.

Next, we provide a detailed explanation of the Policy Optimization method and the update method for the value function. The policy iteration objective is defined as
\begin{equation}
    \label{actorloss}
    L_{\pi}\left(\boldsymbol{\theta}_{k}\right)=\hat{\mathbb{E}}_{t}\left[\min \left(r_{t}\left(\boldsymbol{\theta}_{k}\right) A\left(o_{t}, a_{t}\right), f_{c l i p}\left(r_{t}\left({\theta}_{{k}}\right)\right) A\left(o_{t}, a_{t}\right)\right)\right]
\end{equation}
where
\begin{equation}
    r_{t}({\theta}_{k})=  \frac {\pi _ {{\theta}_{k} }(a_ {t}|o_ {t})}{\pi _ {{\theta}_{k}^{old} }(a_ {t}|o_ {t})}  ,
\end{equation}
\begin{equation}
    \begin{split}
        A\left(o_{t}, a_{t}\right)=&-V_{\pi_{\boldsymbol{\omega }_{k}}}\left(o_{t}\right)+\sum_{l=t}^{T-1} \gamma^{l-t} R(o_t,a_t)
        +\gamma^{T-t} V_{\pi_{\boldsymbol{\omega}_{k}}}\left(o_{t}\right)
    \end{split}
\end{equation}
\begin{equation}
    f_{clip}(r_{t}({\theta}_{k}))=  \begin{cases}1-\epsilon,\:r_{t}({\theta}_{k})<1-\epsilon, \\
        1+\epsilon,\:r_{t}({\theta}_{k})>1+\epsilon, \\r_{t}({\theta}_{k}),\:1-\epsilon \leq r_{t}({\theta}_{k}) \leq 1+\epsilon.\end{cases}
\end{equation}

Here the $\hat{\mathbb{E}}_t[\ldots]$ indicates the empirical average reward over a batch of samples, $r_{t}({\theta}_{k})$ is the importance ratio,
$A\left(o_{t}, a_{t}\right)$ is an estimator of the advantage function using \textit{Generalized Advantage Estimation} method and $\epsilon$ is a hyperparameter. The policy network $\pi_{\theta_k}$ is updated through mini-batch stochastic gradient ascent method as follows:
\begin{equation}
    \label{actorupdate}
    \theta_k\leftarrow\theta_k+l_{\alpha}\nabla_{\theta_k}L_\pi
\end{equation}
where $l_{\alpha}$ is the learning rate for policy network. The loss function for updating the critic network is defined as
\begin{equation}
    \label{criticloss}
    L_{V}\left(\boldsymbol{\omega}_{k}\right)=\hat{\mathbb{E}}_{t}[V_{\omega_{k}}\left(o_{t}\right)-V_{t}^{\text {targ }}]^{2}
\end{equation}
where $V_t^{\text {targ }}$ denotes the total discounted reward from time step $t$ until the end of the episode. The critic network is updated through the mini-batch stochastic gradient descent method as follows:
\begin{equation}
    \label{criticupdate}
    \omega_k\leftarrow\omega_k-l_{\beta} \nabla_{{\omega}_k}L_{V}
\end{equation}
where $l_{\beta}$ represents the learning rate for the critic network. Building on the aforementioned analysis, we present the proposed MALPPO algorithm in \textbf{Algorithm 2}. The algorithm's complexity is influenced by the multiplication operations in the MLP and the memory and forget operations in the LSTM cell. In our algorithm, the input dimension is determined by the number of MSPs $N$ and the number of MRPs $M$. Let $H$ represent the number of neurons in the hidden layers of the MLPs and LSTM. The complexity analysis of MLP is $O((N+M)H)$ and the complexity analysis of LSTM is $O(H^{2})$. Consequently, the overall complexity analysis is $O((N+M)H + H^{2})$ for each time slot.


\section{Simulation Results}

\subsection{Simulation Settings}

We consider that there are $M\in \left [ 2,4 \right ]$ MRPs and $N\in \left [ 2,6 \right ] $ MSPs in the VT migration. Each VT is characterized by a data size $D_i\in \left [ 10,50 \right ](\rm{MB}) $, required CPU cycles $L_i \backsim \mathcal{N}(5000,500)  (\rm{Megacycles})$, and a maximum tolerant delay $K_i^{max} \in \left [ 2,4 \right ] (\rm{s})$. The satisfaction coefficient and sensitivity coefficient of MSP $\alpha_i$ and $\beta_i$ follow a normal distribution $\mathcal{N}(\mu_{\alpha},1)$ and $\mathcal{N}(\mu_{\beta},1)$, respectively. The social ties between MSPs follow a normal distribution $\mathcal{N}(\mu_w,1)$ and the computation capability of MSP $i$ $f_i \backsim \mathcal{N}(15,5) (\rm{GHZ})$, we set $\mu_{\alpha}$ = 30, $\mu_{\beta}$ = 30, $\mu_w$ = 5 \cite{nie2020multi}. For MRP $j$, the task arriving rate $\lambda_{j}$ and the task processing rate $\mu_{j}$ follow a normal distribution $\mathcal{N}(450,20) (\rm{PPS})$ and $\mathcal{N}(500,20) (\rm{PPS})$, respectively. The cost of executing VT migration $c_j$ follows a normal distribution $\mathcal{N}(\mu_c,0.05)$ and the maximum price $p^{max}$ is 1.5. We set $\mu_c = 0.1$ by default. Regarding the RSU transmission parameters, the transmitter power $\rho $ is $40\rm{dBm}$, the unit channel power gain $h$ is $-20\rm{dB}$, the distance between the RSUs $d$ is $500\rm{m}$, the path-loss coefficient $\epsilon$ is $2$, and the average noise power $N_0$ is $-150\rm{dBm}$ \cite{zhang2023learningbased}.

We conducted our experiments using PyTorch 2.1 on Windows 11. The parameters of the DRL were fine-tuned with specific values: the number of experiences $L=3$, minibatch size $B=20$, number of training episodes $E=20$, time slot of each episode $T=100$, update times $\mathcal{L}=5$, and $l_{\alpha}=l_{\beta}=0.0001$ during experiments. Both hidden layers of the MLP have $128$ nodes, and both hidden layers of LSTM have $64$ nodes. The Adam optimizer is employed for optimizing the neural networks.
\begin{figure}
    \centering
    \begin{subfigure}{\columnwidth}
        \centering
        \includegraphics[width=0.95\linewidth]{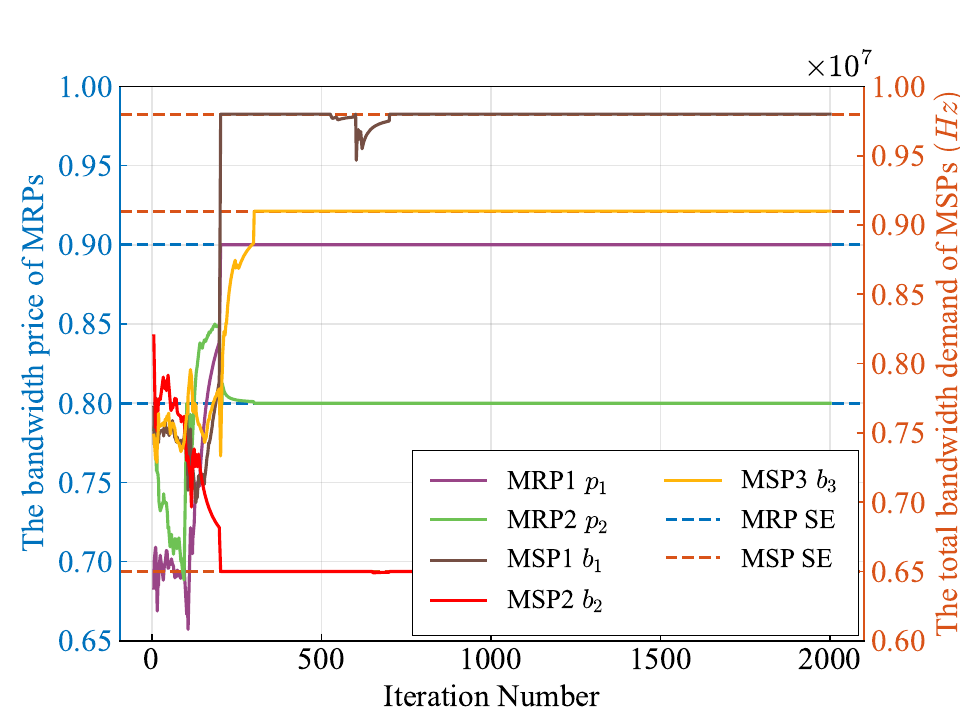}
        \caption{MSPs' bandwidth demand and MRPs' pricing strategies}
        \label{Convergence_act}
    \end{subfigure}

    \vspace{1em} 

    \begin{subfigure}{\columnwidth}
        \centering
        \includegraphics[width=0.95\linewidth]{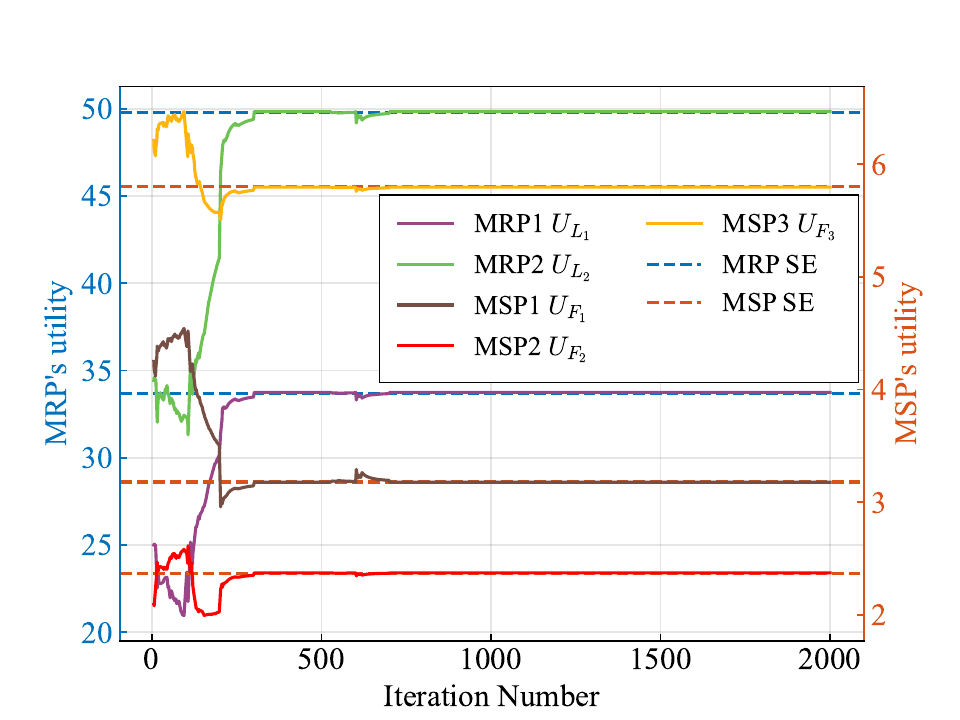}
        \caption{MSPs' and MRPs' utilities}
        \label{Convergence_utility}
    \end{subfigure}

    \caption{The convergence analysis of the proposed MALPPO, with $N$ = 3 and $M$ = 2, the migration costs are $c_1 = 0.3$ and $c_2 = 0.1$. The satisfaction coefficients are $\alpha_1 = 30$, $\alpha_2 = 25$, and $\alpha_3 = 35$.}
    \label{Convergence}
\end{figure}
In the convergence analysis, we compare our MALPPO algorithm with six baseline methods as follows:
\begin{itemize}
    \item MALPPO-RL: the degraded version of the proposed MALPPO scheme, where the leader MRPs randomly select their pricing strategies.
    \item MALPPO-RF: the degraded version of the proposed MALPPO scheme, where the follower MSPs randomly select their bandwidth demand strategies.
    \item MAPPO \cite{yu2022surprising}: the PPO algorithm for multi-agent environment.
    \item MAA2C \cite{li2018distributional}: the A2C algorithm for multi-agent environment.
    \item ADMM-SE: we leverage ADMM in \textbf{Algorithm 1} to obtain Stackelberg Equilibrium under information sharing.
    \item Random: all players randomly choose their pricing strategies or demand strategies in the VT migration.
\end{itemize}
\begin{figure}
    \centering
    \includegraphics[width=0.95\linewidth]{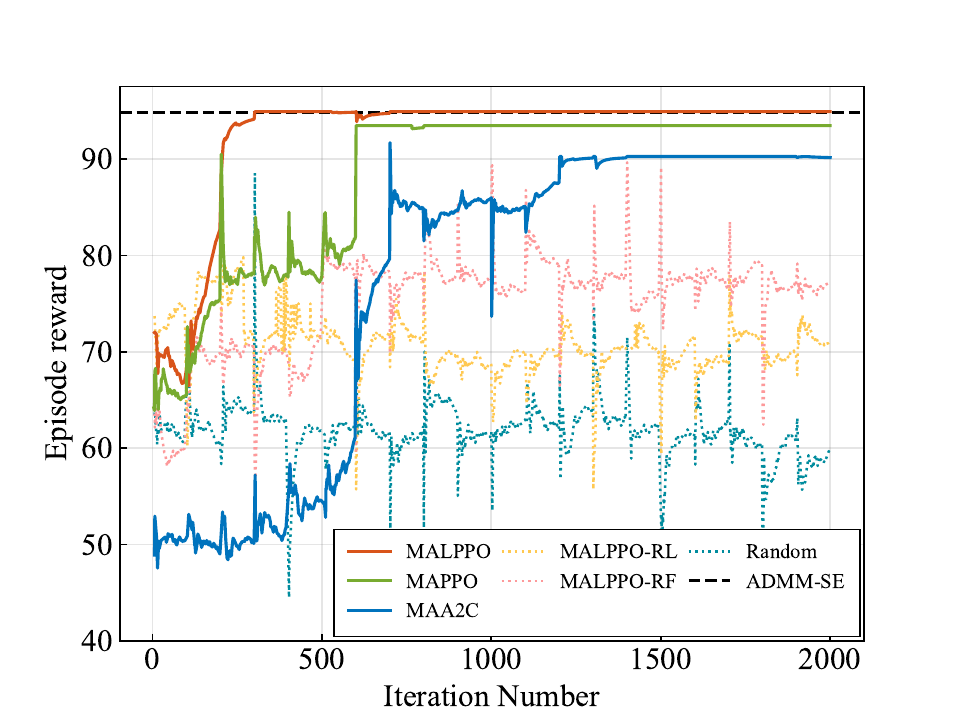}
    \caption{Comparison of episode reward curves of
        MALPPO and baselines for the Stackelberg Game.}
    \label{compare}
\end{figure}
\begin{figure*}
    \centering

    \begin{subfigure}{0.24\textwidth}
        \centering
        \includegraphics[width=\linewidth]{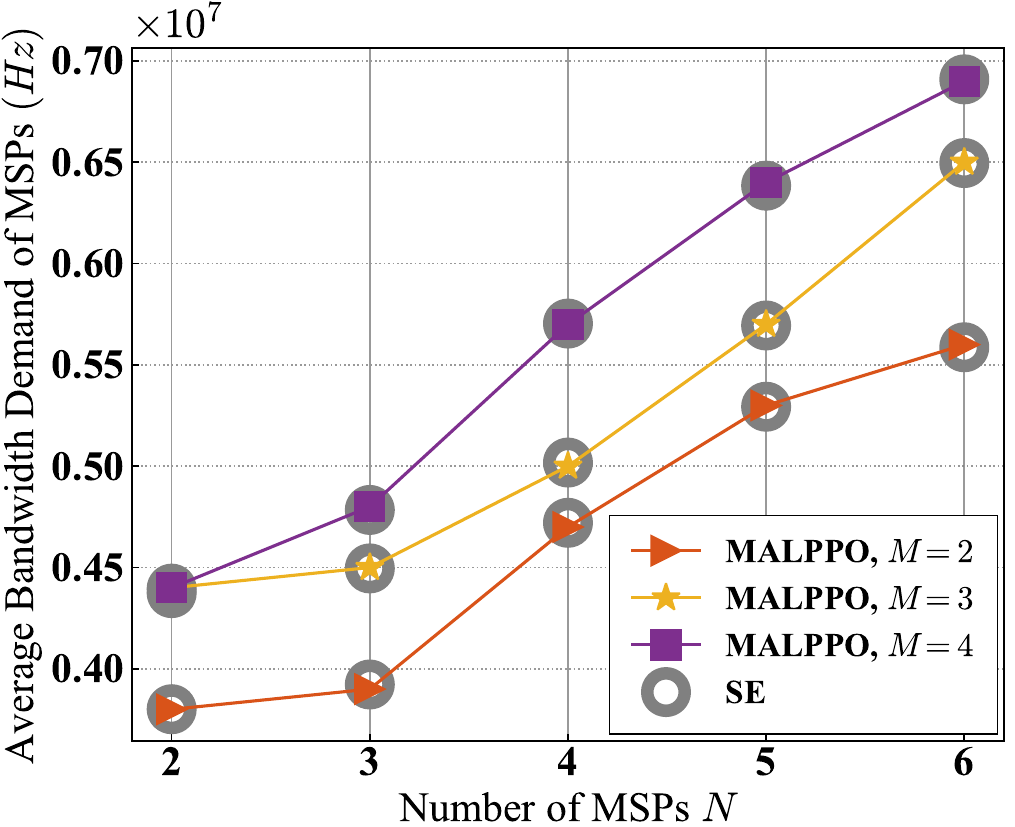}
        \caption{Average Demand of the MSPs}
        \label{Fig5_a}
    \end{subfigure}
    \hfill
    \begin{subfigure}{0.24\textwidth}
        \centering
        \includegraphics[width=\linewidth]{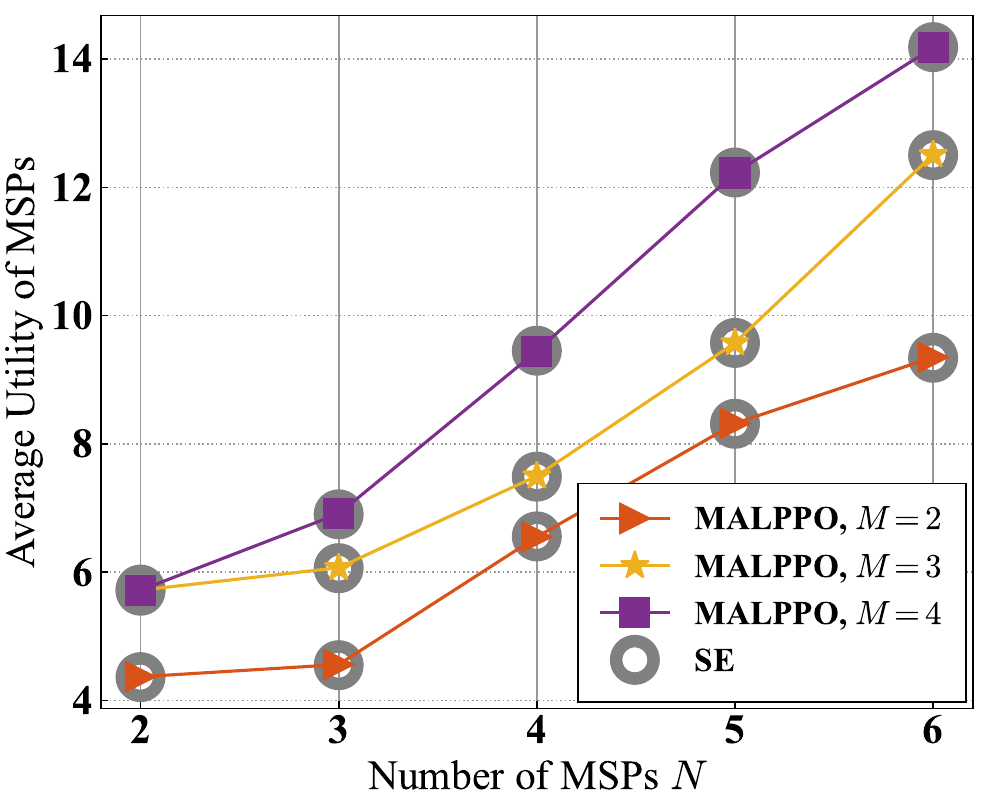}
        \caption{Average Utility of the MSPs}
        \label{Fig5_b}
    \end{subfigure}
    \hfill
    \begin{subfigure}{0.24\textwidth}
        \centering
        \includegraphics[width=\linewidth]{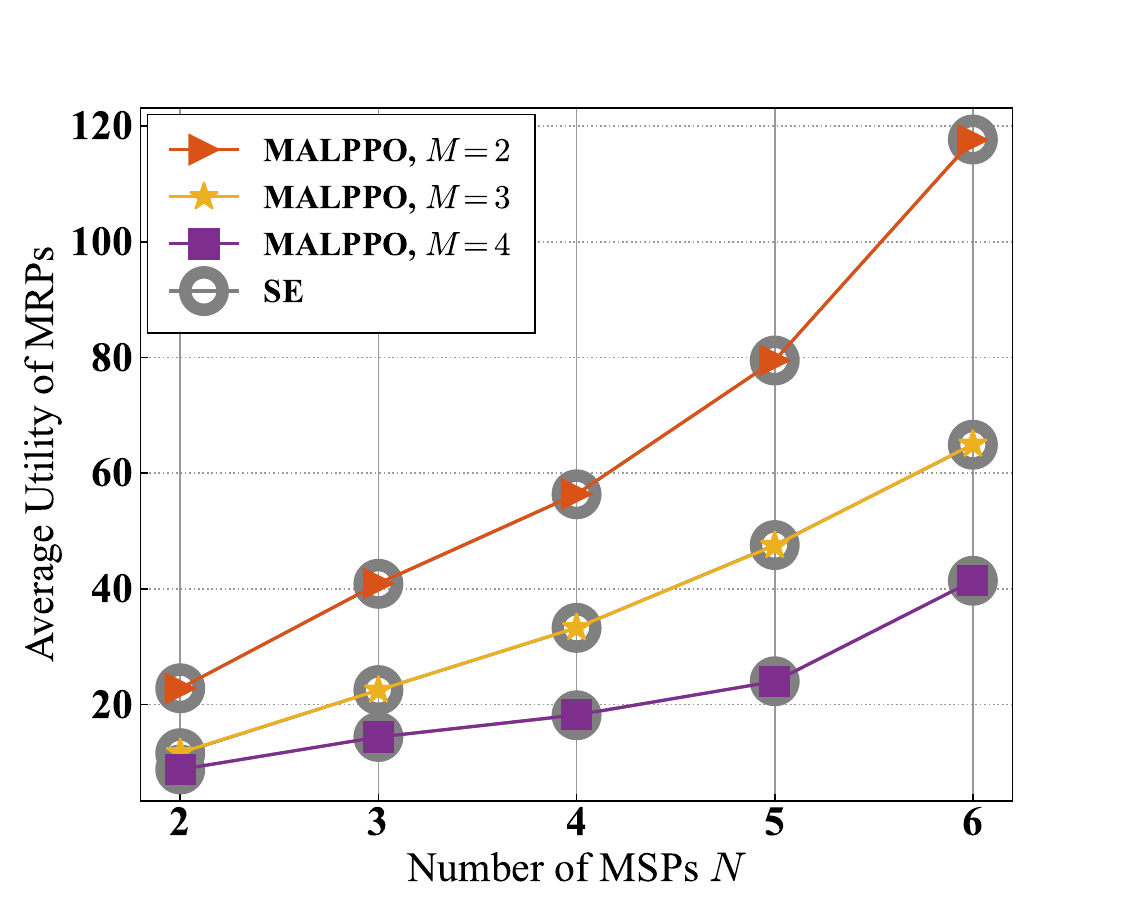}
        \caption{Average Utility of the MRPs}
        \label{Fig5_c}
    \end{subfigure}
    \hfill
    \begin{subfigure}{0.24\textwidth}
        \centering
        \includegraphics[width=\linewidth]{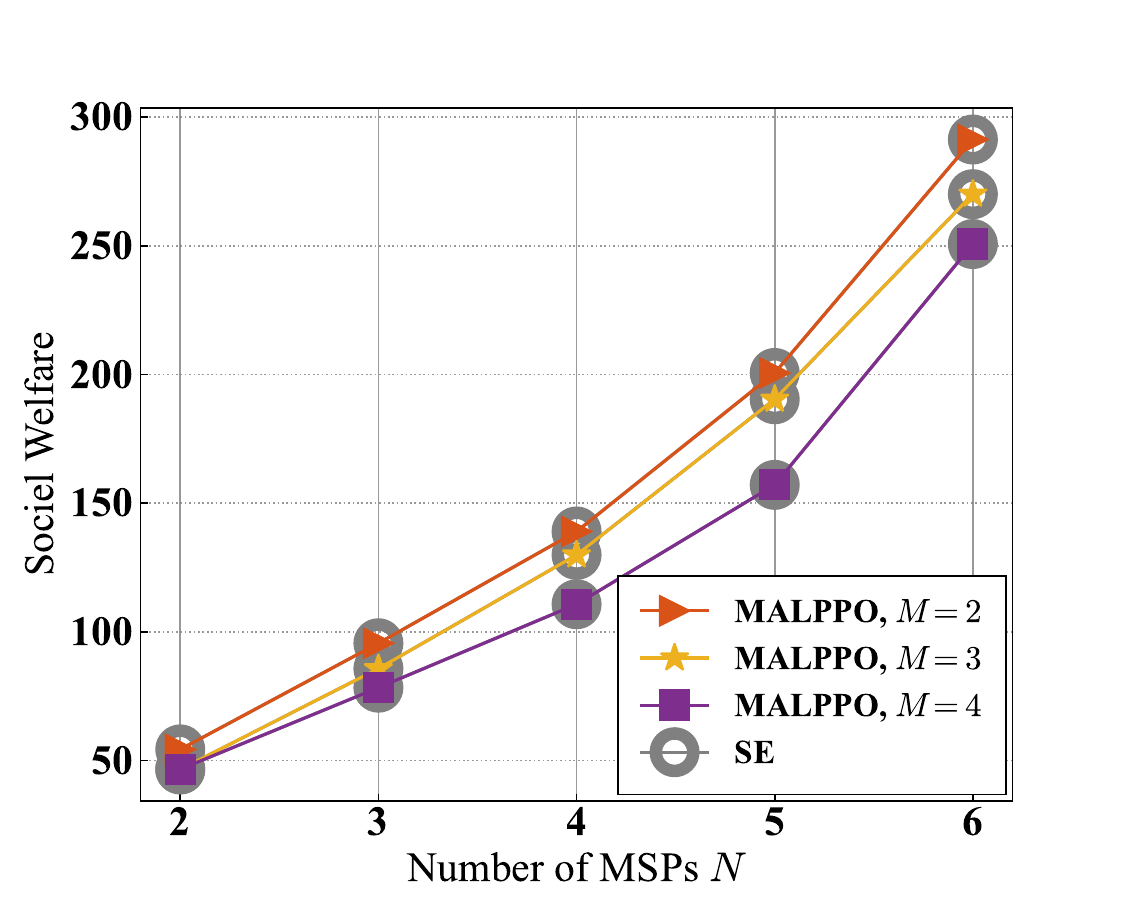}
        \caption{Social Welfare}
        \label{Fig5_d}
    \end{subfigure}

    \caption{
        The performance of the MALPPO is evaluated as the number of MSPs ($N$) varies within$\left[2, 6\right]$, and the number of MRPs is considered in the set $\left[2, 3, 4\right]$. The social welfare is calculated as the sum of the utilities of MRPs and MSPs.}
    \label{Fig5}
\end{figure*}

\begin{figure*}
    \begin{subfigure}{0.24\textwidth}
        \includegraphics[width=\linewidth]{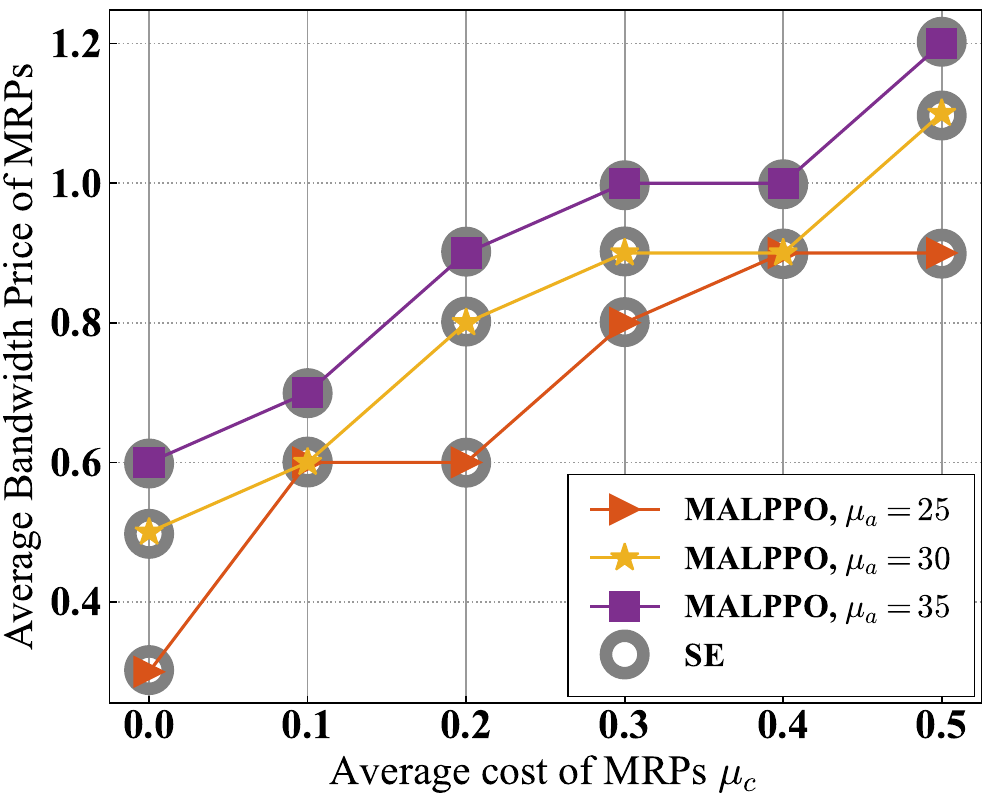}
        \caption{Average Price of the MRP}
        \label{Fig6_a}
    \end{subfigure}
    \hfill
    \begin{subfigure}{0.24\textwidth}
        \includegraphics[width=\linewidth]{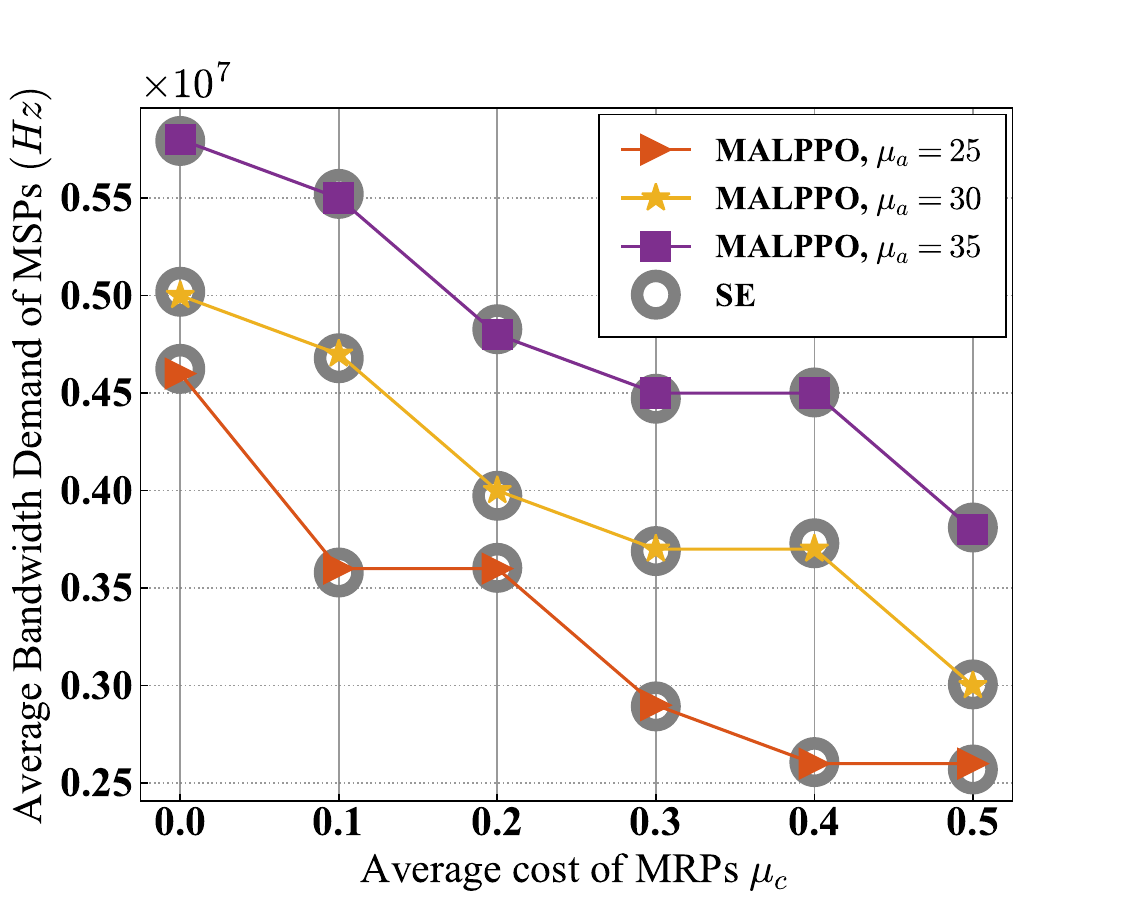}
        \caption{Average Demand of the MSPs}
        \label{Fig6_b}
    \end{subfigure}
    \hfill
    \begin{subfigure}{0.24\textwidth}
        \includegraphics[width=\linewidth]{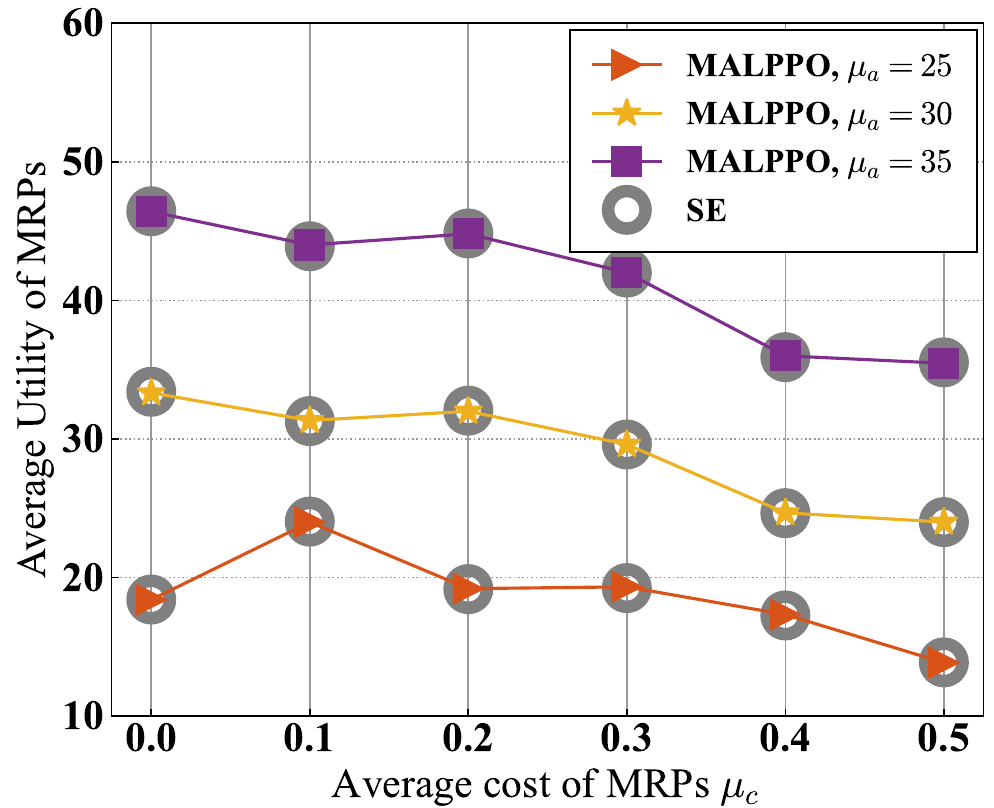}
        \caption{Average Utility of the MRPs}
        \label{Fig6_c}
    \end{subfigure}
    \hfill
    \begin{subfigure}{0.24\textwidth}
        \includegraphics[width=\linewidth]{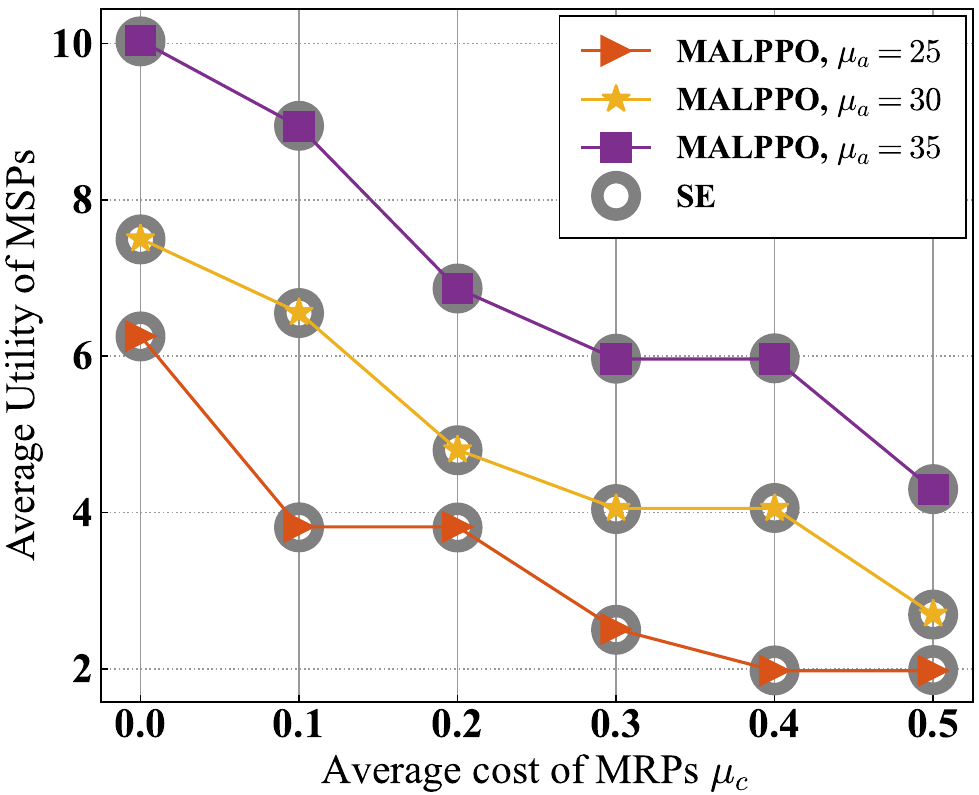}
        \caption{Average Utility of the MSPs}
        \label{Fig6_d}
    \end{subfigure}

    \caption{
        The performance of the MALPPO is assessed as the average cost coefficient of MRPs ($\mu_c$) varies within $\left[0.0, 0.5\right]$, the average satisfaction coefficient of MSPs ($\mu_a$) is selected from the set $\left[25, 30, 35\right]$, with $M = 3$  and $N = 4$.}
    \label{Fig6}
\end{figure*}

\subsection{Convergence Analysis}
We initiate our analysis by examining the convergence behavior of the proposed MALPPO algorithm with 2 MRPs and 3 MSPs. The migration costs of MRPs are $c_1 = 0.3$ and $c_2 = 0.1$, while the satisfaction coefficients of MSPs are $\alpha_1 = 30$, $\alpha_2 = 25$ and $\alpha_3 = 35$. As illustrated in Fig. \ref{Convergence}, it is evident that during the training iterations of the MALPPO algorithm, both the actions and utilities of MSPs and MRPs converge to the SE after approximately 200 iterations. Specifically, the bandwidth prices of MRPs converge to 0.9 and 0.8, while the total bandwidth demands of MSPs converge to 0.60, 0.79, and 0.91. The utilities of MRPs and MSPs under SE are determined to be 30.56, 45.12, 3.78, 1.66, and 5.52, respectively.

In the subsequent analysis, we conduct a comparative performance evaluation of the MALPPO algorithm against six baseline approaches under identical experimental settings. Fig. \ref{compare} shows that our proposed MALPPO algorithm significantly outperforms other algorithms with the highest reward and the fastest convergence speed. Notably, the random approach consistently yields the lowest reward, and both MALPPO-RL and MALPPO-RF, which incorporate random pricing and demand strategies, achieve inferior rewards compared to the fully DRL algorithm. In terms of reward improvement, our MALPPO algorithm outperforms the vanilla MAPPO method by 1.5\%, the MAA2C method by 5.3\%, the MALPPO-RF method by 23.2\%, the MALPPO-RL method by 35.6\%, and the Random method by a substantial 58.2\%. Concerning convergence speed, our algorithm converges in approximately 200 iterations, whereas MAPPO and MAA2C require around 600 and 1200 iterations, respectively.

As shown in Figs. \ref{Convergence}-\ref{Fig7}, we compare our algorithm with the SE solved by ADMM in both convergence analysis experiments and parameter influence analysis experiments. In Fig. \ref{Convergence} and \ref{compare}, SE is represented by dashed lines, while in Figs. \ref{Fig5}-\ref{Fig7}, SE is indicated by a gray circle. It can be observed that under various parameter conditions, our algorithm consistently converges accurately to the SE.

\subsection{Impact of MSPs' number $N$}
In Fig. \ref{Fig5}, the influence of the MSPs' number $N$ on system performance is demonstrated, with $N$ varying within $\left[2,6\right]$. As shown in Fig. \ref{Fig5_a} and Fig. \ref{Fig5_b}, an increase in the number of MSPs results in a corresponding rise in both the average bandwidth demand and average utilities of MSPs. For instance, in a scenario with only two MRPs, as the number of MSPs increases from 2 to 6, the average bandwidth purchase of MSPs rises from 0.38 * 10e7 Hz to 0.56 * 10e7 Hz, and the average utility of MSPs increases from 4.37 to 9.35. This observed trend can be attributed to several factors. On the one hand, due to the social effect, as the number of MSPs increases, MSPs are more inclined to purchase more bandwidth to expedite migration, aiming for higher utility and service experience. Hence, the demand for purchased bandwidth rises. On the other hand, despite the increase in bandwidth demand from MSPs, the competition among MRPs suppresses the increase in bandwidth prices. If one MRP raises its bandwidth price, more MSPs may choose to purchase additional bandwidth from other MRPs. Consequently, the utility of that MRP decreases. Thus, as the number of MSPs increases, the prices among MRPs remain relatively stable, leading to an enhancement in MSPs' utilities. As the total demand for bandwidth increases, as illustrated in Fig. \ref{Fig5_c} and Fig. \ref{Fig5_d}, the average utility of MRPs and social welfare also rise.
\subsection{Impact of MRPs' number $M$}
In Fig. \ref{Fig5}, the influence of the MRPs' number $M$ on system performance is analyzed, with $M$ varying from 2 to 4. The figure reveals that with an increase in the number of MRPs, there is a notable rise in the average bandwidth purchase of MSPs and an associated increase in MSPs' utility. Conversely, the average utility of MRPs decreases, leading to a slight decline in social welfare. For instance, with 3 MSPs, when the number of MRPs is 2, the average utility for MRPs is 40.95, and the social welfare is 95.58. However, when the number of MRPs increases to 4, the average utility for MRPs drops to 14.4, and the social welfare decreases to 78.33. This observed phenomenon can be attributed to heightened competition among MRPs as the number of MRPs increases. To attract more MSPs to purchase their bandwidth resources, each MRP chooses to lower its bandwidth prices. This competition ultimately leads to a scenario where all MRPs reduce their prices. MSPs, benefiting from lower prices, can afford to purchase more bandwidth to expedite migration, thus increasing their own utility. However, for MRPs, the intensified competition results in decreased utility, contributing to a reduction in overall social welfare.
\begin{figure}

    \begin{subfigure}{\columnwidth}

        \includegraphics[width=0.90\linewidth]{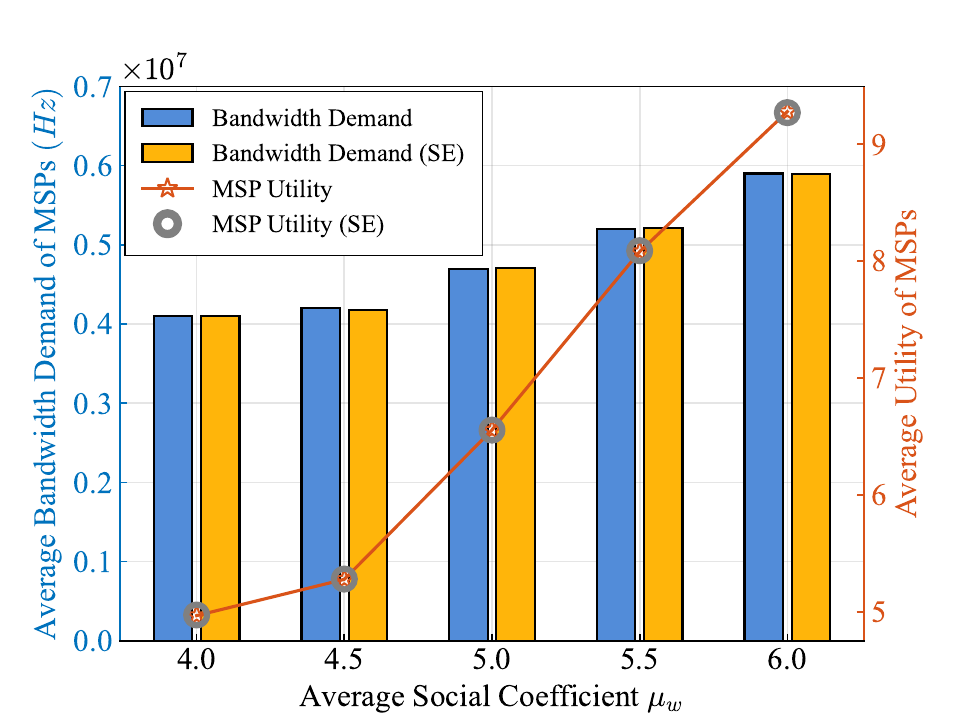}
        \caption{Average Bandwidth Demand and Utility of the MSPs}
        \label{fig:act}
    \end{subfigure}

    \vspace{1em} 

    \begin{subfigure}{\columnwidth}

        \includegraphics[width=0.90\linewidth]{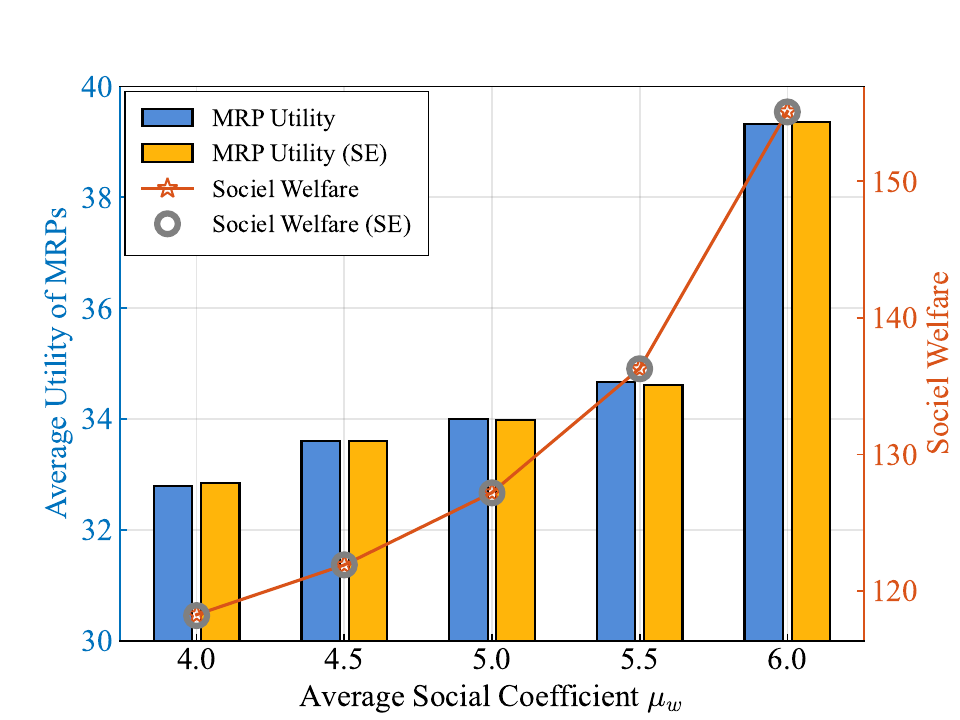}
        \caption{Average Utility of the MRPs and Social Welfare}
        \label{fig:reward}
    \end{subfigure}

    \caption{The performance of the MALPPO is evaluated by varying the average social coefficient within $\left[4.0, 6.0\right]$.}
    \label{Fig7}
\end{figure}

\subsection{Impact of the average cost coefficient $\mu_c$}
In the analysis experiment concerning the impact of MRP cost coefficients and MSP satisfaction coefficients, we set the number of MRPs as 3, the number of MSPs as 4, and the average cost coefficient of MRPs $\mu_c$ varies within the range $\left[0.0, 0.5\right]$. In Fig. \ref{Fig6}, we observe that as the average migration cost for MRPs increases, the bandwidth prices set by MRPs also exhibit an upward trend to avoid the loss of profit. For instance, in Fig. \ref{Fig6_a}, when $\mu_{\alpha} = 30$, the cost increases from 0.1 to 0.5, and the bandwidth prices set by MRPs rise from 0.6 to 1.1. Therefore, as the prices set by MRPs increase, MSPs tend to reduce the demand for purchased bandwidth to maintain their own utility. Consequently, the average utilities of both MSPs and MRPs decrease, which decreases social welfare.

\subsection{Impact of the average satisfaction coefficient $\mu_a$}
The satisfaction coefficient $\alpha$ reflects an MSP's satisfaction level regarding obtaining a certain amount of bandwidth for VT migration. In Fig. \ref{Fig6}, we set three groups of average satisfaction coefficients $\mu_w$, namely 25, 30, and 35. It can be observed that as the MSP satisfaction coefficient increases, both the average price set by MRPs and the average bandwidth purchased by MSPs increase. Additionally, the average utilities of both MSPs and MRPs also increase. For instance, when the average cost is 0.2, at a satisfaction level of 25, the average price set by MRPs is 0.6, and the average bandwidth purchased by MSPs is 0.36 * 10e7 Hz. However, at a satisfaction level of 35, the price increases to 0.9, and the bandwidth purchase quantity rises to 0.48 * 10e7 Hz. The reason is that as $\mu_a$ increases, the higher satisfaction values incentivize MSPs to purchase more bandwidth resources to expedite VT migration. Simultaneously, the entire group of MRPs, in response to the increased tolerance of MSPs for prices and their substantial bandwidth purchases, may choose to raise prices to further enhance their own revenue. Consequently, the utility of MRPs increases, leading to an increase in the utility of MSPs as well.
\subsection{Impact of the average social coefficient $\mu_w$}

The average social coefficient $\mu_w$ reflects the cooperative nature among the MSPs in the group. In Fig. \ref{Fig7}, we analyze this social benefit by varying $\mu_w$ in increments of 0.5, ranging from 4.0 to 6.0, with the number of MSPs $N=4$ and the number of MRPs $M=3$. It is observed that as $\mu_w$ decreases, both the average bandwidth purchase quantity and average utility of MSPs exhibit a decline. Additionally, the utility of MRPs and social welfare also decreases. The reason is that with the decrease in the social coefficient, each MSP's ability to gain additional revenue from other MSPs diminishes. Consequently, external social benefits decrease, leading MSPs to reduce their bandwidth purchases. Due to the decrease in bandwidth purchase quantity, the utility of MRPs also declines, resulting in a decrease in social welfare.
\begin{figure}
    \includegraphics[width=0.90\linewidth]{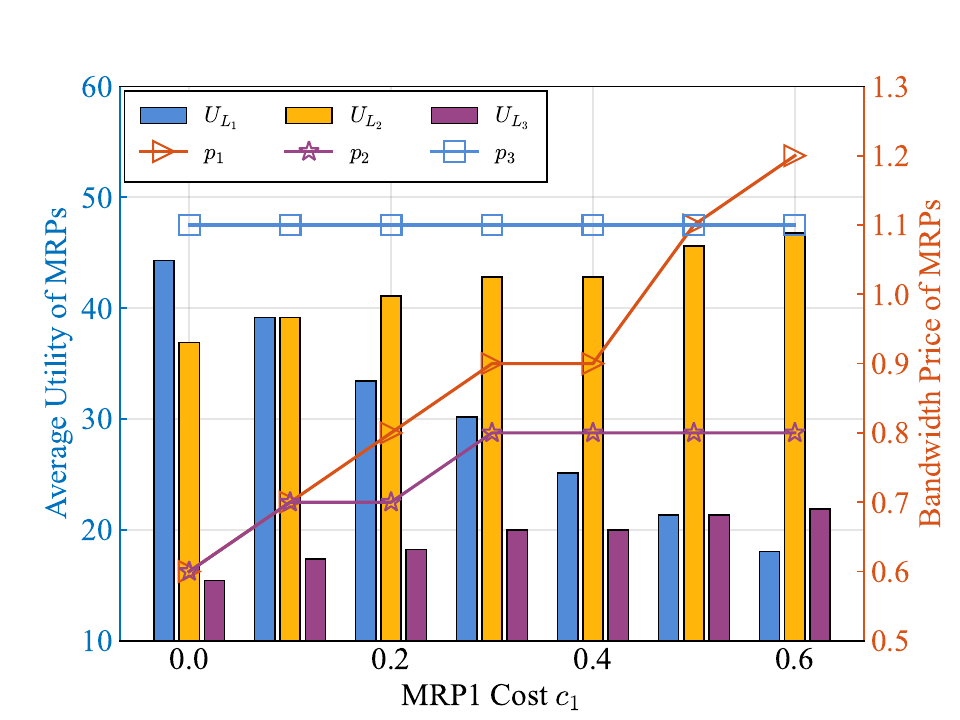}
    \caption{
        The analysis assesses the impact of MRP1's cost ($c_1$) within $\left[0.0, 0.6\right]$, with $c_2 =0.1$, $c_3 = 0.5$, and $N = 4$.}
    \label{Fig8}
\end{figure}
\subsection{Impact of the MRP1's cost $c_1$}
We delve into the analysis of the competitive dynamics among MRPs by varying the cost of one MRP. Specifically, we set the migration costs as $c_1 = 0.0$, $c_2 = 0.1$, $c_3 = 0.5$, and the number of MSPs $N = 4$. We perform the analysis by changing $c_1$ from 0.0 to 0.6. In Fig. \ref{Fig8}, it can be observed that as the MRP cost $c_1$ increases, MRP1 is compelled to raise the price to maintain profitability. Consequently, the bandwidth sales of MRP1 decrease and the utility of MRP1 continuously declines. For MRP3, since its own cost remains constant and the cost of MRP1 keeps rising, the probability of MSPs choosing MRP3 gradually increases, leading to an increase in the utility of MRP3. As for MRP2, initially engaged in intense competition with MRP1, it adopts a slightly higher pricing strategy to ensure a modest increase in profit. Subsequently, with unchanged pricing and an increasing probability of being chosen by MSPs, the utility of MRP2 continues to grow.

\section{Conclusion}

In this paper, we propose a novel incentive mechanism between MSPs and MRPs for the Migration-empowered Vehicular Metaverses.
Firstly, we integrated the social effects among MSPs and the competitiveness among MRPs to design a Stackelberg game with multi-leader multi-follower. Subsequently, we employed the backward induction method to prove the existence and uniqueness of the Stackelberg Equilibrium and utilized the ADMM algorithm to obtain specific equilibrium solutions. Following that, we introduced privacy protection requirements and developed the MALPPO algorithm based on LSTM and PPO to find optimal solutions in the presence of incomplete information.
\bibliographystyle{IEEEtran}
\bibliography{ref}

\end{document}